\pdfoutput=1
\documentclass[a4paper,numberwithinsect]{lipics-v2021}

\usepackage{amsfonts}
\usepackage{amsmath}
\usepackage{graphicx}
\usepackage{bussproofs}

\newcommand{\myComment}[1]{}

\newcommand{\defeq}{\stackrel{\mathsf{def}}{=}}
\newcommand{\defequiv}{\stackrel{\mathsf{def}}{\Longleftrightarrow}}
\newcommand{\set}[1]{\left\{ #1 \right\}}
\newcommand{\setdef}[2]{\left\{ #1 \>\middle|\> #2 \right\}}

\newcommand{\SOSTrans}[2]{#1 \Rightarrow #2}

\newcommand{\config}[2]{\left< #1, #2 \right>}

\newcommand{\SOSConfTrans}[4]{\SOSTrans{\config{#1}{#2}}{\config{#3}{#4}}}
\newcommand{\SOSConfFinTrans}[3]{\SOSTrans{\config{#1}{#2}}{#3}}

\newcommand{\denot}[1]{\left[\!\left[#1\right]\!\right]}
\newcommand{\fdenot}[1]{\left|\!\left|#1\right|\!\right|}
\newcommand{\arexdenot}[1]{\mathcal{A}\!\denot{#1}}
\newcommand{\bexdenot}[1]{\mathcal{B}\!\denot{#1}}

\newcommand{\trcdenot}[1]{\mathcal{S}_{\mathit{tr}}\!\denot{#1}}
\newcommand{\trcdenoth}[1]{\mathcal{S}^0_{\mathit{tr}}\!\denot{#1}}
\newcommand{\trcdenoti}[1]{\mathcal{S}_{\mathit{tr}}\!\denot{#1}_{\mathcal{I}}}
\newcommand{\sosdenot}[1]{\mathcal{S}_{\mathit{sos}}\!\denot{#1}}

\newcommand{\logdenot}[1]{\left|\!\left|#1\right|\!\right|_{\mathcal{V}}}
\newcommand{\logdenotV}[2]{\left|\!\left|#1\right|\!\right|_{#2}}
\newcommand{\logdenotsubst}[3]{\left|\!\left|#1\right|\!\right|_{\mathcal{V} [#2 \mapsto #3]}}

\newcommand{\restr}[2]{\!\!\mid_{#1}^{#2}}
\newcommand{\chop}[0]{\raisebox{1ex}{$\frown$}}

\newcommand{\wltf}[3]{\mathsf{wltf} (#1, #2, #3)}
\newcommand{\stf}[2]{\mathsf{stf} (#1, #2)}
\newcommand{\stfs}[1]{\mathsf{stf} (#1)}
\newcommand{\mes}[1]{\mathsf{mes} (#1)}
\newcommand{\stfsh}[1]{\mathsf{stf}' (#1)}

\newcommand{\stfh}[2]{\mathsf{stf}(#1, #2)}

\newcommand{\can}[1]{\mathsf{can} (#1)}


\newcommand{\WHILE}{\mbox{\bf While}}
\newcommand{\Rec}{\mbox{\bf Rec}}

\newcommand{\AExp}{\mbox{\bf AExp}}
\newcommand{\BExp}{\mbox{\bf BExp}}
\newcommand{\Stm}{\mbox{\bf Stm}}


\newcommand{\Skip}{\mbox{\bf skip}}

\newcommand{\If}{\mbox{\bf if}}
\newcommand{\Then}{\mbox{\bf then}}
\newcommand{\Else}{\mbox{\bf else}}
\newcommand{\While}{\mbox{\bf while}}
\newcommand{\Do}{\mbox{\bf do}}

\newcommand{\Any}{\mbox{\bf havoc}}
\newcommand{\Abort}{\mbox{\bf diverge}}



\newcommand{\ttt}{\mbox{\bf tt}}
\newcommand{\fff}{\mbox{\bf ff}}

\newcommand{\State}{\mbox{\bf State}}


\newcommand{\judge}[2]{#1 : #2}
\newcommand{\gjudge}[3]{#1 \rhd #2 : #3}

\newcommand{\sequent}[2]{#1 \>\vdash\> #2}
\newcommand{\semsequent}[2]{#1 \>\models\> #2}
\newcommand{\semsequenti}[2]{#1 \>\models_\mathcal{I}\> #2}

\newcommand{\Rule}[3]{
   \begin{tabular}{cc}
      {\sc {#1}}
        &
       \begin{tabular}{c}
          ${#2}$ \\ \hline
          ${#3}$ 
       \end{tabular}
    \end{tabular}}

\newcommand{\RuleTwo}[4]{
   \begin{tabular}{cc}
      {\sc {#1}}
        &
       \begin{tabular}{c}
         ${#2}$ \\
         ${#3}$ \\         \hline
          ${#4}$ 
       \end{tabular}
    \end{tabular}}

\newcommand{\dg}[1]{\marginpar{DG: #1}}

\hideLIPIcs
\nolinenumbers

\bibliographystyle{plainurl}

\title{An Expressive Trace Logic for Recursive Programs}

\author{Dilian Gurov}{KTH Royal Institute of Technology, Stockholm, Sweden}{dilian@kth.se}{0000-0002-0074-8786}{}

\author{Reiner H\"{a}hnle}{Technical University of Darmstadt, Germany}{haehnle@cs.tu-darmstadt.de}{0000-0001-8000-7613}{}

\authorrunning{D. Gurov and R. H\"ahnle}

\Copyright{Dilian Gurov and Reiner H\"ahnle}

\keywords{Denotational semantics, compositional semantics, program
  specification, compositional verification, fixed point logic, trace
  logic}

\ccsdesc[500]{Theory of computation~Denotational semantics}
\ccsdesc[500]{Theory of computation~Operational semantics}
\ccsdesc[500]{Theory of computation~Program specifications}
\ccsdesc[500]{Theory of computation~Program verification}
\ccsdesc[500]{Theory of computation~Logic and verification}
\ccsdesc[500]{Theory of computation~Modal and temporal logics}


\begin{document}

\maketitle

\begin{abstract}
  We present an expressive logic over trace formulas, based on binary
  state predicates, chop, and least fixed-points, for precise specification 
  of programs with recursive procedures. Both, programs and trace
  formulas, are equipped with a direct-style, fully compositional,
  denotational semantics that on programs coincides with the standard
  SOS of recursive programs.
  We design a compositional proof calculus for proving finite-trace
  program properties, and prove soundness as well as (relative) completeness.
  We show that each program can be mapped to a semantics-preserving
  trace formula and, vice versa, each trace formula can be mapped to a
  canonical program over slightly extended programs, resulting in a
  Galois connection between programs and formulas.
  Our results shed light on the correspondence between programming
  constructs and logical connectives. 
\end{abstract}


\section{Introduction}
\label{sec:intro}

It is uncommon that specification languages used in program
verification are as expressive as the programs they are intended to
specify:
In model checking \cite{CGP99} one typically abstracts away from data
and unwinds unbounded control structures. Specification of unbounded
computations is achieved by recursively defined temporal operators.
In deductive verification \cite{HaehnleHuisman19}, first-order
Hoare-style contracts \cite{Hoare71} are the basis of widely used
specification languages \cite{acsl,JML-Ref-Manual}. The latter
specify computation by taking symbolic memory snapshots in terms of
first-order formulas, typically at the beginning and at the end of
unbounded control structures, such as procedure call and return, or
upon loop entry and exit. Contracts permit to approximate the effect
of unbounded computation as a finite number of first-order formulas
against which a program's behavior can be verified.

Imagine, in contrast, a logic for program specification that is
\emph{at least as expressive} as the programs it is supposed to
describe. Formulas $\phi$ of such a logic characterize a generally
infinite set of program computation traces $\sigma$. One can then form
\emph{judgments} of program statements $S$ of the form
$\judge{S}{\phi}$ with the natural semantics that any possible trace
$\sigma$ of $S$ is one of the traces described by $\phi$.

Two arguments against such a \emph{trace logic} are easily raised:
first, one expects a specification language to be capable of
\emph{abstraction} from implementation details; second, its
computational complexity. Regarding the first, we note that any
desired degree of abstraction can be achieved by \emph{definable}
constructs, i.e.~\emph{syntactic sugar}, in a sufficiently rich trace
logic. Regarding the second, it is far from obvious whether the
computational worst-case tends to manifest itself in \emph{practical}
verification \cite{HaehnleHuisman19}. First experiments seem to
indicate this is not the case.
%
On the other hand, a rich, trace-based specification logic bears many
advantages:
%
\begin{enumerate}
\item On the practical side, trace-based specification permits to
  express, for example, that an event or a call happened (exactly
  once, at least once) or that did not happen. Likewise, one can
  specify without reference to the target code (via assertions) that a
  condition holds at some point which is not necessarily an endpoint
  of a procedure call.  It is also possible to specify
  inter-procedural properties, such as call sequences, absence of
  callbacks, etc.  Such properties are important in security analysis
  and they are difficult or impossible to express with Hoare-style
  contracts or in temporal logic.
\item The semantics of trace formulas in terms of trace sets can be
  closely aligned with a suitable program semantics, which greatly
  simplifies soundness and completeness proofs.
\item An expressive trace logic makes for simple and natural
  completeness proofs for judgments $\judge{S}{\phi}$, because it is
  unnecessary to encode the program semantics in order to construct
  sufficiently strong first-order conditions.
\item In a calculus for judgments of the form $\judge{S}{\phi}$ one
  one can design \emph{inference} rules that directly decompose
  judgments into simpler ones, but also \emph{algebraic} rules that
  simplify and transform $\phi$, and one may mix both reasoning
  styles.
\item The semantics of programs, of trace formulas, and the rules of
  the calculus can be formulated in a \emph{fully compositional}
  manner: definitions and inference rules involve no intermediate
  state or context.
\item In consequence, we are able to establish a close
  correspondence between programs and formulas, which sheds light on
  the exact relation between each program construct and its
  corresponding logical connective. We formulate and prove a Galois
  connection that formalizes this fact.
\end{enumerate}

Our main contribution is 
a trace logic for imperative programs with recursive procedures where
we formalize and prove the claims above.
We restrict attention to the \emph{terminating executions} of
programs, i.e.~to their \emph{finite trace semantics}. This is not a
fundamental limitation, but the desire not to obscure the construction
with complications that may be added later.
Still, adaptating the theory to maximal traces (including infinite runs) 
is not trivial, and working out the details is left as future work
(see Section~\ref{sec:non-term-progr}).

Our paper is structured as follows: In Section~\ref{sec:rec} we define
the programming language \Rec\ used throughout this paper and we give
it a standard SOS semantics \cite{Plotkin04}. In
Section~\ref{sec:rec-traces} we define a denotational semantics for
\Rec\ in fully compositional style and prove it equivalent to the SOS
semantics. In Section~\ref{sec:logic-traces} we introduce our trace
logic and map programs to formulas by the concept of a \emph{strongest
  trace formula}, which is shown to fully preserve program
semantics. In Section \ref{sec:proof-calculus} we define a proof
calculus for judgments and show it to be sound and complete. In
Section \ref{sec:canonical-programs}, using the concept of a
\emph{canonical program}, we map trace formulas back to programs in
the slightly extended language $\Rec^\ast$ with non-deterministic
guards.  We prove that $\Rec^\ast$ and trace formulas form a Galois
connection via strongest trace formulas and canonical programs. In
Section~\ref{sec:related-work} we discuss related work, in
Section~\ref{sec:future-work} we sketch some extensions, including
options to render specifications more abstract and how to prove
consequence among trace formulas. In Section~\ref{sec:extro} we
conclude.


\section{The Programming Language \Rec}
\label{sec:rec}

We define a simple programming language \Rec\ with recursive
procedures and give it a standard SOS semantics.  We follow the
notation of~\cite{nie-nie-07-book}, adapted to \Rec\ syntax.
\begin{definition}[\Rec\ Syntax]
  \label{def:rec-syntax}
  A \Rec\ program is a pair $\langle S, T\rangle$, where $S$ is a
  statement with the \emph{abstract syntax} defined by the grammar:
  $$
  S \>::=\> \Skip \mid 
  x := a \mid 
  S_1 ; S_2 \mid 
  \If\ b\ \Then\ S_1\ \Else\ S_2 \mid 
  m()
  $$
  and where~$T$ is a \emph{procedure table} given by $T\,::=\,M^\ast$ with
  declarations of parameter-less procedures $M\,::=\,m\,\{S\}$.

  In the grammar, $a$ ranges over arithmetic expressions $\AExp$, and
  $b$ over Boolean expressions $\BExp$. Both are assumed to be
  side-effect free and are not specified further. All variables are
  global and range over the set of integers~$\mathbb{Z}$. We assume
  programs are \emph{well-formed}, i.e., only declared procedures are
  called, and procedure names are unique.
\end{definition}

\begin{example}
\label{ex:rec-program}
  Consider the \Rec\ program $p_0\defeq\langle S, T\rangle$ with statement
  $S\defeq x := 3; \mathit{even} ()$ and procedure table:
  $$ T\quad\defeq\quad
  \begin{array}{r@{\;}l}
    \mathit{even} & \set{\If\ x = 0\ \Then\ y := 1\ \Else\ x := x-1; \mathit{odd} ()} \\
    \mathit{odd} & \set{\If\ x = 0\ \Then\ y := 0\ \Else\ x := x-1; \mathit{even} ()}
  \end{array}
  $$
  The intended semantics is that $\mathit{even}$ terminates in a state
  where $y=1$ if and only if it is started in a state where $x$ is
  even and non-negative, and it terminates in a state where $y=0$ if
  and only if it is started in a state where $x$ is odd and
  non-negative.
\end{example}

\begin{example}
\label{ex:rec-program-simple}
  Consider the \Rec\ program $p_1\defeq\langle S, T\rangle$ with
  $S\defeq\mathit{down}()$ and procedure table:
  $$ T\quad\defeq\quad
  \mathit{down} 
  \set{\If\ x > 0\ \Then\ x := x-2; \mathit{down}()\ \Else\ \Skip}
  $$
  The intended semantics is that $p_1$ terminates in a state where
  $x=0$ if and only if it is started in state where $x$ is even and
  non-negative.
\end{example}

\begin{remark}\label{rem:while-syn}
  Loops can be defined with the help of (tail-)recursive programs.
  For example, a loop of the form ``$\While\ b\ \Do\ S$'' can be
  simulated with a procedure declared in $T$ as:
  $$
  m\,\{\If\ b\ \Then\ S;m()\ \Else\ \Skip\}
  $$
  using a unique name $m$ and replacing the occurrence of the loop
  with a call to $m()$.
\end{remark}

\begin{figure}
\centering
\Rule{Skip}
  {-}
  {\SOSConfFinTrans{\Skip}{s}{s}} \quad
\Rule{Assign}
  {-}
  {\SOSConfFinTrans{x := a}{s}{s [x \mapsto \arexdenot{a} (s)]}} \\~\\

\Rule{Seq-1}
  {\SOSConfFinTrans{S_1}{s}{s'}}
  {\SOSConfTrans{S_1 ; S_2}{s}{S_2}{s'}} \quad
\Rule{Seq-2}
  {\SOSConfTrans{S_1}{s}{S_1'}{s'}}
  {\SOSConfTrans{S_1 ; S_2}{s}{S_1' ; S_2}{s'}} \\~\\

\Rule{If-1}
  {-}
  {\SOSConfTrans{\If\ b\ \Then\ S_1\ \Else\ S_2}{s}{S_1}{s}}~
  \mbox{if $\bexdenot{b} (s) = \ttt$} \\~\\

\Rule{If-2}
  {-}
  {\SOSConfTrans{\If\ b\ \Then\ S_1\ \Else\ S_2}{s}{S_2}{s}}~
  \mbox{if $\bexdenot{b} (s) = \fff$} \\~\\

\Rule{Call}
  {-}
  {\SOSConfTrans{m()}{s}{S}{s}}~
  \mbox{if $m$ is declared as $m\,\{S\}$ in $T$}
\caption{\label{fig:rec-sos}SOS rules for \Rec.}
\end{figure}

A standard, \emph{structural operational semantics} (SOS) for \Rec\ is
defined in Figure~\ref{fig:rec-sos} (sometimes referred to as
\emph{small-step semantics}).
We use it as a base line when defining the denotational finite-trace
semantics in Section~\ref{sec:rec-traces}.

Let $\mathit{Var}$ be the set of program variables, and \State\ the
set of program states $s : \mathit{Var} \rightarrow \mathbb{Z}$.
Let $\arexdenot{a} (s) \in \mathbb{Z}$ denote the integer value of the 
arithmetic expression~$a$ when evaluated in state~$s$, and
$\bexdenot{b} (s) \in \mathbb{T}$ denote the truth value of the Boolean
expression~$b$ when evaluated in state~$s$, both defined as expected.

A \emph{configuration} is either a pair $\config{S}{s}$ consisting of
a statement and a state, designating an initial or intermediate
configuration; or a state~$s$, designating a final configuration.  To
simplify notation we assume that $S$ is evaluated relative to a \Rec\
program with a procedure table $T$ which is not explicitly specified.

The transitions of the SOS either relate two intermediate configurations,
or an intermediate with a final one, and thus have the shape
$\SOSConfTrans{S}{s}{S'}{s'}$ or $\SOSConfFinTrans{S}{s}{s'}$, 
respectively.

\begin{definition}[\Rec\ SOS]
  \label{def:rec-sos}
  The \emph{structural operational semantics} (SOS) of \Rec\ is
  defined by the rules given in Figure~\ref{fig:rec-sos}.
\end{definition}


The structural operational semantics of \Rec\ \emph{induces} a
finite-trace semantics in terms of the sequences of states that are
traversed from an initial to a final configuration when executing a
given statement in the SOS.
Let $\State^+$ denote the set of all non-empty, finite sequences of
states. Formally, we define a function
$\sosdenot{S} : \Stm \rightarrow 2^{\State^+}$, i.e., a function
such that $\sosdenot{S} \subseteq \State^+$ for any statement~$S$.

\begin{definition}[Induced Finite-Trace Semantics]
  \label{def:induced-finite-trace-semantics}
  Let $S$ be a statement. Then, $\sosdenot{S}$ is defined as the set
  of (finite) sequences $s_0 \cdot s_1 \cdot \ldots \cdot s_n$ of
  states for which there are statements $S_0, S_1, \ldots, S_{n-1}$
  such that $S_0 = S$, $\SOSConfTrans{S_i}{s_i}{S_{i+1}}{s_{i+1}}$ for
  all $0 \leq i \leq n-2$, and
  $\SOSConfFinTrans{S_{n-1}}{s_{n-1}}{s_n}$.
\end{definition}
Observe that in Definition~\ref{def:induced-finite-trace-semantics}
\emph{any} state $s_0 \in \State$ can serve as the initial state of a
finite trace.
Next we design a ``direct-style'', denotational finite-trace semantics
that \emph{conforms} with the SOS, in the sense that it is equal to the
finite-trace semantics induced by the SOS.


\section{A Denotational Finite-Trace Semantics for \Rec}
\label{sec:rec-traces}

We define the semantic function
$\mathcal{S}_{\mathit{tr}} : \Stm \rightarrow 2^{\State^+}$ with the
intention that $ \trcdenot{S} \>=\> \sosdenot{S} $.
%
Unlike $\mathcal{S}_{sos}$, however, $\mathcal{S}_{\mathit{tr}}$ is
defined directly, without referring to other semantic rules as SOS
does (hence the term ``direct-style'').
%
%
We  define $\trcdenot{S}$ in the style of \emph{denotational 
semantics}, compositionally, by induction on the structure of~$S$, 
and through defining equations.



Let us define a unary \emph{restriction} operator on trace sets, 
for any trace set~$A$ and Boolean expression~$b \in \BExp$, as follows:
$
\begin{array}{@{}l@{\,}c@{\,}l@{}}
  A \restr{b}{} & \>\defeq\> & \setdef{s \cdot \sigma \in A}{\bexdenot{b} (s) = \ttt}
\end{array}
$.
It filters out all traces in~$A$ whose first state does not
satisfy~$b$.
Another unary operator on trace sets is defined as
$
\sharp A \>\defeq\> 
\setdef{s \cdot s \cdot \sigma}{s \cdot \sigma \in A}
$
which duplicates the first state in each trace in~$A$.
Finally, let us define the binary operator on trace sets:
$ A \chop B \>\defeq\> \setdef{\sigma_A \cdot s \cdot \sigma_B}
{\sigma_A \cdot s \in A \>\wedge\> s \cdot \sigma_B \in B} $
which concatenates traces from~$A$ with traces from~$B$ that agree on
the last and first state, respectively, but without duplicating that
state, see also \cite{HalpernShoham91}.




\begin{figure}
  \centering
$ \begin{array}{@{}r@{\;}c@{\;}l@{\qquad}r@{\;}c@{\;}l@{}}
   \trcdenoth{\Skip}_\rho  & \defeq & 
      \setdef{s \cdot s}{s \in \State} &
   \trcdenoth{x := a}_\rho & \defeq & 
      \setdef{s \cdot s[x \mapsto \arexdenot{a} (s)]}{s \in \State} \\
   \trcdenoth{S_1 ; S_2}_\rho  & \defeq & 
      \trcdenoth{S_1}_\rho \chop \trcdenoth{S_2}_\rho & 
   \trcdenoth{m_i ()}_\rho & \defeq & \rho(m_i) \\ 
    \multicolumn{6}{c}{%
    \trcdenoth{\If\ b\ \Then\ S_1\ \Else\ S_2}_\rho   \,\defeq\,  
      (\sharp \trcdenoth{S_1}_\rho) \!\!\mid_b \cup\, 
      (\sharp \trcdenoth{S_2}_\rho) \!\!\mid_{\neg b}}      \\
   \end{array} $
\caption{Finite-trace semantic equations for \Rec.}
\label{fig:while-tr}
\end{figure}

\begin{definition}[Denotational Finite-Trace Semantics of \Rec]
\label{def:while-tr}
Let~$M = \set{m_1,\ldots,m_n}$ be the set of procedure names in \Rec\
program~$\langle S,T\rangle$, where every~$m_i$ is declared as
$m_i\,\{S_i\}$ in~$T$.
We define a helper function $\trcdenoth{S}_\rho$ that is relativized
on an interpretation $\rho : M \rightarrow 2^{\State^+}$ of the
procedure names, inductively, by the equations given in
Figure~\ref{fig:while-tr}.
The duplication of the initial states in the equation for the $\If$
statement is needed to remain faithful to the SOS of $\Rec$, which
allocates a small step for evaluation of the guard~$b$.
We then introduce a semantic function
$ H : \bigl(2^{\State^+}\bigr)^n \rightarrow \bigl(2^{\State^+}\bigr)^n $ 
defined as:
$$
H (\rho) \,\defeq\, (\sharp \trcdenoth{S_1}_\rho, \ldots, \sharp \trcdenoth{S_n}_\rho)\enspace .
$$

Function $H$ is monotonic and continuous in the CPO with bottom
$\bigl(\bigl(2^{\State^+}\bigr)^n, \sqsubseteq, \varnothing^n\bigr)$,
where $\sqsubseteq$~denotes point-wise set inclusion.  Hence, by the
Knaster-Tarski Theorem, it has a least fixed-point.
Let~$\rho_0$ denote this least fixed-point.
The \emph{denotational finite-trace semantics} of statements~$S$
of \Rec\ is defined relative to this inter\-pretation as:
$ \trcdenot{S} \,\defeq\, \trcdenoth{S}_{\rho_0} $.
\end{definition}



The finite-trace semantics agrees with the SOS of $\Rec$, in the sense
that $\trcdenot{S}$ coincides with the finite-trace semantics
$\sosdenot{S}$ induced by the SOS, as defined in
Definition~\ref{def:induced-finite-trace-semantics}.

\begin{theorem}[Correctness of Trace Semantics]
\label{thm:semantics-correctness-rec}
For all statements~$S$ of \Rec, we have:
$ \trcdenot{S} \>=\> \sosdenot{S} $
\end{theorem}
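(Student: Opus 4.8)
### Proof plan

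The plan is to prove the two inclusions $\trcdenot{S} \subseteq \sosdenot{S}$ and $\sosdenot{S} \subseteq \trcdenot{S}$ separately, in each case proceeding by induction, but with the subtlety that the recursion through the procedure table must be handled by a fixed-point argument rather than by naive structural induction on~$S$ (which would not be well-founded, since a procedure body may call the procedure again). I would first establish a convenient characterisation of $\sosdenot{S}$ analogous to the defining equations of Figure~\ref{fig:while-tr}: namely that $\sosdenot{\Skip}$, $\sosdenot{x:=a}$ are the obvious singleton-step sets, $\sosdenot{S_1;S_2} = \sosdenot{S_1} \chop \sosdenot{S_2}$, $\sosdenot{\If\ b\ \Then\ S_1\ \Else\ S_2} = (\sharp\,\sosdenot{S_1})\restr{b}{} \cup (\sharp\,\sosdenot{S_2})\restr{\neg b}{}$, and $\sosdenot{m_i()} = \sharp\,\sosdenot{S_i}$. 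Each of these is a routine consequence of the shape of the SOS rules and the definition of the induced trace semantics: for the sequential case one uses rules \textsc{Seq-1}/\textsc{Seq-2} to split a run of $S_1;S_2$ at the moment $S_1$ first reaches a final configuration, observing that the join state is shared but not duplicated; for the conditional and call cases one uses that the first SOS step is a guard evaluation resp.\ an unfolding step that leaves the state unchanged, which is exactly what $\sharp$ records.

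For the inclusion $\trcdenot{S} \subseteq \sosdenot{S}$, I would show more generally, by induction on the structure of~$S$, that $\trcdenoth{S}_{\rho^{\mathrm{sos}}} \subseteq \sosdenot{S}$ where $\rho^{\mathrm{sos}}(m_i) \defeq \sharp\,\sosdenot{S_i}$, using the equation for $\sosdenot{-}$ just established; the call case is then immediate since $\trcdenoth{m_i()}_{\rho^{\mathrm{sos}}} = \rho^{\mathrm{sos}}(m_i) = \sharp\,\sosdenot{S_i} = \sosdenot{m_i()}$. This shows $H(\rho^{\mathrm{sos}}) \sqsubseteq \rho^{\mathrm{sos}}$, so $\rho^{\mathrm{sos}}$ is a pre-fixed-point of~$H$, whence by Knaster--Tarski $\rho_0 \sqsubseteq \rho^{\mathrm{sos}}$; combining with monotonicity of $\trcdenoth{S}_{(-)}$ in $\rho$ and running the structural induction once more with $\rho_0$ in place of $\rho^{\mathrm{sos}}$ yields $\trcdenot{S} = \trcdenoth{S}_{\rho_0} \subseteq \trcdenoth{S}_{\rho^{\mathrm{sos}}} \subseteq \sosdenot{S}$.

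For the reverse inclusion $\sosdenot{S} \subseteq \trcdenot{S}$, I would stratify the SOS derivations by length: define $\sosdenot{S}^{(k)}$ to be the set of traces arising from SOS runs using at most $k$ steps, so $\sosdenot{S} = \bigcup_k \sosdenot{S}^{(k)}$, and in parallel recall that $\rho_0 = \bigsqcup_k H^k(\varnothing^n)$ by continuity of~$H$. I would then prove by induction on~$k$ — and, within each $k$, by induction on the structure of~$S$ — that $\sosdenot{S}^{(k)} \subseteq \trcdenoth{S}_{H^k(\varnothing^n)}$: the base case $k=0$ is vacuous (no trace is produced in zero steps, and even one step is needed for the shortest run), and in the step case a run of $S$ of length $\le k{+}1$ either begins with a non-call first step — handled by the structural sub-induction using the characterisation of $\sosdenot{-}$, noting the residual runs have length $\le k$ — or begins with a \textsc{Call} step $m_i() \to_{\mathrm{sos}}^0 S_i$, in which case the remaining run of $S_i$ has length $\le k$, so by IH its trace lies in $\trcdenoth{S_i}_{H^{k-1}(\varnothing^n)} \subseteq \trcdenoth{S_i}_{H^k(\varnothing^n)}$ — wait, we need it in $H^{k}$'s component — precisely $\sharp$ of it lies in $\bigl(H^{k}(\varnothing^n)\bigr)_i = \sharp\,\trcdenoth{S_i}_{H^{k-1}(\varnothing^n)}$, which is $\trcdenoth{m_i()}_{H^{k}(\varnothing^n)}$. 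Taking the union over~$k$ and using that $\trcdenoth{S}_{(-)}$ is continuous (hence commutes with the directed union $\bigsqcup_k H^k(\varnothing^n) = \rho_0$) gives $\sosdenot{S} \subseteq \trcdenoth{S}_{\rho_0} = \trcdenot{S}$.

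The main obstacle is getting the bookkeeping of the two nested inductions right in the reverse direction: one must choose the right measure on SOS runs (plain step count works, but one has to check that a \textsc{Call} step genuinely consumes one unit while \textsc{Seq} prefix steps reduce the length of the residual run of the first component, and that splitting $S_1;S_2$ does not double-count the shared state), and one must be careful that the structural sub-induction at level~$k$ is allowed to invoke the level-$(k{-}1)$ hypothesis at procedure calls but only the level-$k$ hypothesis at structural subterms. A secondary technical point, needed throughout, is the continuity of $A \mapsto \trcdenoth{S}_{\rho[m_i \mapsto A]}$ in each argument, which follows because $\chop$, $\sharp$, and $\restr{b}{}$ all distribute over directed unions of trace sets; this is the same fact underpinning the well-definedness of $\rho_0$ and should be recorded as a small lemma before the main argument.
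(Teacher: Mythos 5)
Your proposal is correct, and for the inclusion $\trcdenot{S} \subseteq \sosdenot{S}$ it takes a genuinely different route from the paper. The paper works with the Kleene approximants $\rho^k = H^k(\varnothing^n)$ and proves $\trcdenoth{S}_{\rho^k} \subseteq \sosdenot{S}$ by an outer induction on~$k$ with an inner structural induction, invoking the SOS rules case by case inside that induction. You instead isolate all operational content in a compositional characterisation of $\sosdenot{\cdot}$ (notably $\sosdenot{S_1;S_2} = \sosdenot{S_1} \chop \sosdenot{S_2}$ and $\sosdenot{m_i()} = \sharp\,\sosdenot{S_i}$), observe that $\rho^{\mathrm{sos}}$ is then a pre-fixed point of~$H$, and conclude $\rho_0 \sqsubseteq \rho^{\mathrm{sos}}$ by Knaster--Tarski together with monotonicity of $\trcdenoth{S}_{(\cdot)}$. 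This buys a cleaner separation between SOS reasoning and lattice reasoning, at the price of proving the characterisation equations (in particular the splitting and gluing of runs of $S_1;S_2$) as a standalone lemma---reasoning the paper also performs, but distributed over its induction cases. For $\sosdenot{S} \subseteq \trcdenot{S}$ your measure (SOS step count, with structural subterms handled at the same level and calls at a strictly smaller one) is essentially the paper's strong induction on trace length; the paper avoids your additional stratification of the denotational side by approximants by appealing directly to the fixed-point identity $\trcdenot{m()} = \sharp\,\trcdenot{S_m}$ (Proposition~\ref{prop:call-unfolding-2}), which lets the call case land in $\trcdenot{m()}$ in one step. Your stratified version also works, and, as you eventually note, only monotonicity of $\trcdenoth{S}_{(\cdot)}$ (not continuity) is needed to pass to the union over~$k$; the one thing to tidy up is the index bookkeeping in the call case, where your mid-sentence correction does land on the right identity $\bigl(H^{k}(\varnothing^n)\bigr)_i = \sharp\,\trcdenoth{S_i}_{H^{k-1}(\varnothing^n)}$.
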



\section{A Logic over Finite Traces}
\label{sec:logic-traces}

Our trace logic can be seen as an Interval Temporal Logic
\cite{HalpernShoham91} with $\mu$-recursion \cite{Park76}, or
alternatively, as a temporal $\mu$-calculus with a binary temporal
operator corresponding to the chop operation over sets of traces (see,
e.g., \cite{sti-93-handbook-chapter} for a general introduction to
$\mu$-calculus).


\subsection{Syntax and Semantics of the Logic}

The philosophy behind our logic is to have logical counterparts to 
the statements of the programming language in terms of their 
finite-trace semantics. 
For instance, we use binary relation symbols that correspond to the
atomic statements, and a chop operator corresponding to sequential 
composition.
This design choice helps to simplify proofs of the properties of the
logic and the calculus.

\begin{definition}[Logic Syntax]
\label{def:logic-syntax}
The \emph{syntax} of the logic of \emph{trace formulas} is defined by
the following grammar:
%
$$ \phi \>::=\> p \mid
                R \mid 
                X \mid 
                \phi_1 \wedge \phi_2 \mid 
                \phi_1 \vee \phi_2 \mid 
                \phi_1 \chop \phi_2 \mid 
                \mu X. \phi
$$
where $p$~ranges over state formulas not further specified here, but
assumed to contain at least the Boolean expressions~$\BExp$,
$R$~ranges over binary relation symbols over states, and $X$ over a
set $\mathsf{RVar}$ of \emph{recursion variables}.
\end{definition}

\begin{figure}
  \centering
  $ \begin{array}{rcl@{\qquad}rcl}
   \logdenot{p}  & \defeq & 
      \setdef{s \cdot \sigma}{s \models p} &
   \logdenot{R}  & \defeq & 
      \setdef{s \cdot s'}{R (s, s')} \\
   \logdenot{X}  & \defeq & 
      \mathcal{V} (X) &
   \logdenot{\phi_1 \wedge \phi_2}  & \defeq & 
      \logdenot{\phi_1} \cap \logdenot{\phi_2} \\
   \logdenot{\phi_1 \vee \phi_2}  & \defeq & 
      \logdenot{\phi_1} \cup \logdenot{\phi_2} &
   \logdenot{\phi_1 \chop \phi_2}  & \defeq & 
      \logdenot{\phi_1} \chop \logdenot{\phi_2} \\
   \multicolumn{6}{c}{\logdenot{\mu X. \phi}  \ \defeq \ 
      {\displaystyle\bigcap} \setdef{\gamma \subseteq \State^+}
                     {\logdenotsubst{\phi}{X}{\gamma} \subseteq \gamma}}
   \end{array} $
\caption{Finite-trace semantic equations for formulas.}
\label{fig:logic-semantics}
\end{figure}

\begin{definition}[Logic Semantics]
  \label{def:logic-semantics}
  The \emph{finite-trace semantics} of a formula~$\phi$ is defined as
  its denotation $\logdenot{\phi} \subseteq \State^+$, relativized on
  a \emph{valuation}
  $\mathcal{V} : \mathsf{RVar} \rightarrow 2^{\State^+}$ of the
  recursion variables, inductively by the equations given in
  Figure~\ref{fig:logic-semantics}, where in the last clause
  $\mathcal{V} [X \mapsto \gamma]$ denotes the updated valuation.
\end{definition}

One can show that the transformers
$\lambda \gamma.\logdenotsubst{\phi}{X}{\gamma}$ are monotonic
functions in the complete lattice $(2^{\State^+}, \subseteq)$ and
hence, by Tarski's fixed-point theorem for complete
lattices~\cite{tar-53}, they have least and greatest fixed-points.
In particular, the least fixed point is simultaneously also the least
pre-fixed point, hence the defining equation for $\mu X. \phi$.
And because it is a fixed point, we have the following result for
unfolding fixed-point formulas.

\begin{proposition}[Fixed-Point Unfolding]
  \label{prop:fixed-point-unfolding}
  Let $\mu X. \phi$ be a formula and $\mathcal{V}$ a valuation.  Then:
  $
  \logdenot{\mu X. \phi} = \logdenot{\phi [\mu X. \phi / X]}
  $.
\end{proposition}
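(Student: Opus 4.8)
The plan is to reduce the statement to two ingredients: a \emph{substitution lemma} connecting syntactic substitution with semantic update of the valuation, and the fact — already recorded just before the proposition — that the monotone transformer $\lambda\gamma.\,\logdenotsubst{\phi}{X}{\gamma}$ has $\logdenot{\mu X.\phi}$ as an actual fixed point (not merely its least pre-fixed point). Given these, the proposition is a one-line calculation.

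First I would establish the substitution lemma: for every formula $\phi$, recursion variable $X$, formula $\psi$, and valuation $\mathcal{W}$,
$$
\logdenotV{\phi[\psi/X]}{\mathcal{W}} \;=\; \logdenotV{\phi}{\mathcal{W}[X\mapsto\logdenotV{\psi}{\mathcal{W}}]}\enspace .
$$
The proof is by structural induction on $\phi$. The cases $\phi = p$ and $\phi = R$ are immediate, since $X$ does not occur and substitution is the identity; likewise $\phi = Y$ with $Y\neq X$. For $\phi = X$ both sides equal $\logdenotV{\psi}{\mathcal{W}}$. The cases $\phi_1\wedge\phi_2$, $\phi_1\vee\phi_2$, $\phi_1\chop\phi_2$ follow directly from the induction hypotheses, because the corresponding semantic clauses apply $\cap$, $\cup$, $\chop$ to the denotations of the immediate subformulas. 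The only delicate case is $\phi = \mu Y.\psi'$: after $\alpha$-renaming we may assume $Y\neq X$ and that $Y$ is not free in $\psi$, so that $(\mu Y.\psi')[\psi/X] = \mu Y.(\psi'[\psi/X])$; then one compares
$\bigcap\setdef{\gamma}{\logdenotV{\psi'[\psi/X]}{\mathcal{W}[Y\mapsto\gamma]}\subseteq\gamma}$
with
$\bigcap\setdef{\gamma}{\logdenotV{\psi'}{\mathcal{W}[X\mapsto\logdenotV{\psi}{\mathcal{W}}][Y\mapsto\gamma]}\subseteq\gamma}$, and equality of the two collections of pre-fixed points follows by applying the induction hypothesis to $\psi'$ under the valuation $\mathcal{W}[Y\mapsto\gamma]$, noting that $\logdenotV{\psi}{\mathcal{W}[Y\mapsto\gamma]} = \logdenotV{\psi}{\mathcal{W}}$ since $Y$ is not free in $\psi$, and that the two order of updates commute because $X\neq Y$.

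Then I would conclude. Put $f(\gamma) \defeq \logdenotsubst{\phi}{X}{\gamma}$; by the remark preceding the proposition $f$ is monotone on $(2^{\State^+},\subseteq)$ and $\logdenot{\mu X.\phi} = \bigcap\setdef{\gamma}{f(\gamma)\subseteq\gamma}$ is its least pre-fixed point. The standard Knaster--Tarski argument — for any pre-fixed point $\gamma$ one has $\logdenot{\mu X.\phi}\subseteq\gamma$, hence $f(\logdenot{\mu X.\phi})\subseteq f(\gamma)\subseteq\gamma$, so $f(\logdenot{\mu X.\phi})$ is below $\logdenot{\mu X.\phi}$, and then $f$ applied once more shows it is a pre-fixed point, hence also above — gives $f(\logdenot{\mu X.\phi}) = \logdenot{\mu X.\phi}$. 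Instantiating the substitution lemma with $\mathcal{W} \defeq \mathcal{V}$ and $\psi \defeq \mu X.\phi$ now yields
$$
\logdenot{\phi[\mu X.\phi/X]} \;=\; \logdenotsubst{\phi}{X}{\logdenot{\mu X.\phi}} \;=\; f(\logdenot{\mu X.\phi}) \;=\; \logdenot{\mu X.\phi}\enspace ,
$$
as claimed.

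I expect the substitution lemma, and within it the $\mu Y$ case with its capture-avoidance side conditions, to be the only genuine obstacle; the remaining cases are bookkeeping and the fixed-point step is the textbook Knaster--Tarski computation, already half-stated in the text preceding the proposition.
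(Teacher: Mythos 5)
Your proposal is correct and follows exactly the route the paper itself gestures at: the paper gives no explicit proof of this proposition, merely noting that the transformer $\lambda\gamma.\logdenotsubst{\phi}{X}{\gamma}$ is monotone and that its least pre-fixed point is an actual fixed point, and your two ingredients --- the substitution lemma relating $\phi[\psi/X]$ to the valuation update $\mathcal{V}[X\mapsto\logdenot{\psi}]$, and the Knaster--Tarski computation showing $f(\logdenot{\mu X.\phi})=\logdenot{\mu X.\phi}$ --- are precisely what that remark leaves implicit. The only point worth flagging is that your substitution lemma's $\mu Y$ case relies on $\alpha$-renaming, so one should note that the semantics is invariant under $\alpha$-equivalence (immediate from the semantic clause for $\mu$), but this is standard bookkeeping and does not affect correctness.
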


Our calculus is based on closed formulas of the logic.  Observe that
fixed-point unfolding preserves closedness.  For closed formulas the
valuation~$\mathcal{V}$ is immaterial to the semantics
$\logdenot{\phi}$. In this case, we often omit the subscript and
simply write~$\fdenot{\phi}$.


\subsection{Binary Relations}
\label{sec:binary-relations}

We instantiate the set~$\mathsf{Rel}$ of binary relation symbols with
two specific relations:
$$
\begin{array}{rcl}
  \mathit{Id} (s, s')  &  
                         \>\defequiv\>  &  
                                          s' = s \\
  \mathit{Sb}^{a}_{x} (s, s')  &  
                                 \>\defequiv\>  &  
                                                  s' = s[x \mapsto \arexdenot{a} (s)]
\end{array}
$$

These two relations are used to model $\Skip$ and assignment
statements, respectively.
%
The \emph{transitive closure}~$R^+$ of a binary relation~$R$ over
states is easily defined as a recursive formula in our logic as
$ R^+ \>\defequiv\> \mu X.\ (R \vee R \chop X) $.


\begin{example}
  \label{ex:form-1}
  For any arithmetic expression $a$ let $\mathit{Dec}_a$ be a binary
  relation symbol interpreted as follows:
  $
  \mathit{Dec}_a (s, s')  \>\defequiv\> 
  \arexdenot{a} (s') \leq \arexdenot{a} (s)
  $.
  That is, the value of~$a$ does not increase between two consecutive
  states.
  With this symbol, the formula $\mathit{Dec}_a^+$ expresses the
  property that the value of~$a$ does not increase throughout the
  whole execution of a program.
\end{example}

\begin{remark}
  It is conceivable to define arbitrary binary relations $(p,q)$ over
  state formulas $p$, $q$, i.e., a pair of pre- and post-condition.
  Alternatively, one can view a binary relation as a TLA
  \emph{action}~\cite{lam-94-tla}, and use primed versions of the
  variables to refer to their values in the second state of a pair.
  We do not require this generality in our examples.
\end{remark}


\begin{figure}
  \centering
    $\stfh{\overline{X}}{\Skip}  \defeq \mathit{Id} \qquad
    \stfh{\overline{X}}{x := a}  \defeq \mathit{Sb}^{a}_x \qquad
    \stfh{\overline{X}}{S_1;\,S_2}  \defeq \stfh{\overline{X}}{S_1} \chop \stfh{\overline{X}}{S_2}
    $
    \begin{align*}
    \stfh{\overline{X}}{\If\ b\ \Then\ S_1\ \Else\ S_2} & \defeq 
       (b \land \mathit{Id} \chop \stfh{\overline{X}}{S_1})
       \lor
       (\lnot b \land \mathit{Id} \chop \stfh{\overline{X}}{S_2}) \\
  \stfh{\overline{X}}{m()} & \defeq 
  \left\{
    \begin{array}{l@{\quad}l}
    \mathit{Id} \chop \mu X_m.\, \stfh{\overline{X}\cup\{m\}}{S_m} & 
    ~~m\not\in\overline{X},\,m\,\{{S_m}\}\in T \\
    \mathit{Id} \chop X_m & ~~\text{otherwise}
    \end{array}
          \right.
  \end{align*}
\caption{\label{fig:stf}Definition of strongest trace formula.}
\end{figure}


\subsection{Strongest Trace Formulas} 
\label{sec:stf}

Since our program logic is able to characterize program traces, and not
merely pre- and postconditions or intermediate assertions, it is
possible to establish a close correspondence between programs and
trace formulas.
This correspondence is captured by the
following---constructive---definition of the \emph{strongest trace
formula} $\stfs{S}$ for a given program~$S$, which characterizes all
terminating traces of~$S$.

For each procedure declaration $m\,\{S_m\}$ in $T$, we create a
fixed-point formula, whenever $m$ is called the first time. Subsequent
calls to $m$ result in a recursion variable. To achieve this, we
parameterize the strongest trace formula function with the already
created recursion variables $\overline{X}$. This parameter is
initialized to~$\varnothing$ and is ignored by all case definitions
except the one for a recursive call.

\begin{definition}[Strongest Trace Formula]
\label{def:stf}
  Let $\langle S,T\rangle$ be a \Rec\ program.
  The \emph{strongest trace formula} for~$S$, denoted
  $\stfs{S}$, is defined as $\stfs{S}\defeq\stfh{\varnothing}{S}$, where
  $\stfh{X}{S}$ is defined inductively in Figure~\ref{fig:stf}.
\end{definition}

\begin{example}\label{ex:stf-rec}
  For the program $\mathit{even}()$ with the procedure table of
  Example~\ref{ex:rec-program}, the strongest trace formula is:

  \noindent
  $
  \begin{array}{@{}l@{}l}
    & \mathit{Id} \chop \mu X_\mathit{even}.\Bigl( 
    \bigl(x = 0 \,\land\,\mathit{Id} \chop \mathit{Sb}^{1}_{y}\bigl)\ \lor\\ 
   & \;\bigl(x \neq 0 \land
     \mathit{Id} \chop \mathit{Sb}^{x-1}_{x} \chop \mathit{Id} \chop \mu X_{\mathit{odd}}.((x = 0 \land \mathit{Id} \chop \mathit{Sb}^{0}_{y}) \lor
  (x \neq 0 \land 
     \mathit{Id} \chop \mathit{Sb}^{x-1}_{x} \chop\mathit{Id}\chop X_{\mathit{even}}))\bigr)\Bigr)
\end{array}
$

\noindent The binder for $X_{odd}$ can be removed without changing the
semantics.
\end{example}

\begin{example}\label{ex:stf-rec-simple}
  For the program in Example~\ref{ex:rec-program-simple}, the
  strongest trace formula is:\smallskip
  
  \noindent$\stfs{S} = 
      \mathit{Id} \chop 
      \mu X_\mathit{down}.\, 
          \bigl((x > 0 \land \mathit{Id} \chop 
                 \mathit{Sb}^{x-2}_x 
                 \chop \mathit{Id} \chop X_\mathit{down})
                \lor
                 (x \leq 0 \land \mathit{Id}\chop\mathit{Id})
                  \bigr)$ 
\end{example}

\begin{theorem}[Characterisation of Strongest Trace Formula]
\label{thm:stf-characterisation-rec}
Let $\langle S,T\rangle$ be a program of \Rec. Then the following holds:
$ \fdenot{\stfh{\varnothing}{S}} \>=\> \trcdenot{S} $
\end{theorem}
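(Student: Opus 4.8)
The plan is to prove $\fdenot{\stfh{\varnothing}{S}} = \trcdenot{S}$ by relating the two fixed-point constructions that underlie the definitions. On the semantic side, $\trcdenot{S} = \trcdenoth{S}_{\rho_0}$ where $\rho_0$ is the least fixed-point of the functional $H$ over the $n$-fold product CPO. On the formula side, $\stfh{\varnothing}{S}$ introduces, lazily, one binder $\mu X_{m}.\,\stfh{\overline{X}\cup\{m\}}{S_m}$ per procedure $m$, guarded by $\mathit{Id}\chop(\cdot)$; unfolding all these nested fixed-points (via Proposition~\ref{prop:fixed-point-unfolding}) should yield exactly the system of equations that $H$ solves, modulo the bookkeeping of the extra $\mathit{Id}$-prefixes and the $\sharp$ operator. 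So the strategy is: first establish a purely local lemma relating $\stfh{\overline{X}}{S}$ to $\trcdenoth{S}_\rho$ under a matching valuation, and then tie the two fixed-points together.

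Concretely, I would first prove a substitution/compositionality lemma by structural induction on $S$: for any set $\overline{X}$ of "already opened" procedure names, any valuation $\mathcal{V}$, and the induced interpretation $\rho$ defined by $\rho(m) \defeq \sharp\logdenotV{\stfh{\overline{X}}{S_m'}}{\mathcal{V}}$ for an appropriate residual formula (or, more cleanly, $\rho(m)\defeq\logdenotV{X_m}{\mathcal{V}}$ for $m\in\overline{X}$ together with the opened fixed-points for $m\notin\overline{X}$), one has
$$
\logdenotV{\stfh{\overline{X}}{S}}{\mathcal{V}} \;=\; \trcdenoth{S}_{\rho}\enspace .
$$
The cases for $\Skip$, assignment, and sequential composition are immediate from the definitions of $\mathit{Id}$, $\mathit{Sb}^a_x$, and the fact that $\chop$ on formula denotations is $\chop$ on trace sets. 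The $\If$ case matches because $b\land\mathit{Id}\chop\psi$ denotes $\sharp(\cdot)\restr{b}{}$: the state formula $b$ restricts to traces whose first state satisfies $b$, the $\mathit{Id}\chop$ reproduces the $\sharp$ (duplication of the initial state for the guard step), and $\land$ with $\mathit{Id}$ on a one-step relation denotes the diagonal filtered by $b$ — exactly the right-hand side in Figure~\ref{fig:while-tr}. The call case is where the two sub-cases of Figure~\ref{fig:stf} must be checked against $\trcdenoth{m_i()}_\rho = \rho(m_i)$ and the $\sharp$ built into $H$.

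The second and harder part is to show that the global fixed-points coincide. Here I would argue that the nested-$\mu$ formula $\stfh{\varnothing}{S}$, once all $n$ binders are (conceptually) unfolded, denotes precisely $\trcdenoth{S}_{\rho}$ where $\rho$ is the \emph{least} simultaneous solution of the equations $\rho(m_i) = \sharp\,\trcdenoth{S_i}_{\rho}$ — i.e., $\rho=\rho_0$. One direction, $\logdenotV{\stfh{\varnothing}{S_i}}{}\supseteq$ (or $\subseteq$, depending on orientation) the components of $\rho_0$, follows from the pre-fixed-point characterization of $\mu$: each $\logdenotV{\mu X_m.\,\stfh{\overline{X}\cup\{m\}}{S_m}}{}$ is the least pre-fixed point, and by the local lemma the tuple of these denotations is a pre-fixed point of $H$ (up to the $\sharp$/$\mathit{Id}$ correspondence), hence contains $\rho_0$; conversely $\rho_0$ is a pre-fixed point of the formula transformer, giving the reverse inclusion. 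The routine-but-delicate point is handling the \emph{mutual} recursion: the formula opens binders one procedure at a time and in a call-dependency order, so a single formula with one outermost $\mu X_m$ whose body may mention an as-yet-unopened $m'$ (which then gets its own inner $\mu X_{m'}$, possibly referring back to $X_m$) must be shown equivalent, via Bekić's theorem / simultaneous-vs-nested fixed-point equivalence, to the simultaneous least fixed point over all $n$ components. I expect \textbf{this reconciliation of the nested, lazily-opened $\mu$-binders with the simultaneous fixed-point $\rho_0$ of $H$ to be the main obstacle}; everything else is a direct structural induction. A clean way to organize it is to first prove that reordering/merging of the binders does not change the semantics (using monotonicity and Proposition~\ref{prop:fixed-point-unfolding}), reducing $\stfh{\varnothing}{S}$ to a canonical simultaneous-fixed-point form, and only then invoke the local lemma componentwise.
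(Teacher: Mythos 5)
Your proposal is correct and takes essentially the same route as the paper: the paper first proves a local lemma by structural induction (Lemma~\ref{lem:stfsh-characterisation}, for a variant $\stfsh{S}$ of $\stfh{\overline{X}}{S}$ in which every call is mapped to a bare recursion variable), stating $\logdenotV{\stfsh{S}}{\mathcal{V}_\rho}=\trcdenoth{S}_\rho$ for matching $\rho$ and $\mathcal{V}_\rho$, and then handles the call case by translating $\stfh{\varnothing}{m()}$ into a modal equation system whose least solution is identified with $\mathcal{V}_{\rho_0}$ via Beki\v{c}'s principle. The reconciliation of the nested, lazily-opened $\mu$-binders with the simultaneous fixed point of $H$, which you correctly flag as the main obstacle, is exactly the step the paper packages through the standard semantics of modal equation systems rather than arguing from pre-fixed points directly.
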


\section{A Proof Calculus}
\label{sec:proof-calculus}

We present a proof calculus for our logic in the form of a
Gentzen-style deductive proof system, which is \emph{compositional}
both in the statement and the formula.




\subsection{Definition of the Calculus}
\label{sec:proof-calculus-rec}

To obtain a compositional proof rule for procedure calls, its shape
will essentially embody the principle of fixed-point induction
explained in the Appendix. 
For this we need to represent recursion variables in the $\Rec$
language, whose syntax is extended with a set $\mathsf{SVar}$ of
\emph{statement variables}, ranged over by~$Y$. We add these as a new
category of atomic statements to $\Rec$.


To define the semantics of programs in the presence of statement
variables, we relativize the finite-trace semantics $\trcdenoti{S}$ of
statements~$S$ on \emph{interpretations}
$\mathcal{I} : \mathsf{SVar} \rightarrow 2^{\State^+}$ of the
statement variables, lifted from $\trcdenot{S}$ in the canonical
manner.


\begin{definition}[Calculus Syntax]
  \label{def:calculus-syntax}
  \emph{Judgments} are of the form $\judge{S}{\phi}$, where $S$ is a
  $\Rec$ statement, possibly containing statement variables, and
  $\phi$ is a closed trace formula.
  The \emph{sequents} of the calculus are of the shape
  $\sequent{\Gamma}{\judge{S}{\phi}}$, where $\Gamma$ is a possibly
  empty set of judgments.
\end{definition}


\begin{figure}
\centering
\Rule{Skip}
{-}
{\sequent{\Gamma}{\judge{\Skip}{\mathit{Id}}}}~~~~~
\quad
\Rule{Assign}
{-}
{\sequent{\Gamma}{\judge{x := a}{\mathit{Sb}^{a}_{x}}}} 
\\~\\

\Rule{Seq}
{\sequent{\Gamma}{\judge{S_1}{\phi_1}} \quad
  \sequent{\Gamma}{\judge{S_2}{\phi_2}}}
{\sequent{\Gamma}{\judge{S_1; S_2}{\phi_1 \chop \phi_2}\rule{0ex}{2.5ex}}}
\quad
\Rule{If}
{\sequent{\Gamma}{\judge{\Skip; S_1}{\lnot b \lor \phi}} \quad
  \sequent{\Gamma}{\judge{\Skip; S_2}{b \lor \phi}}}
{\sequent{\Gamma}{\judge{\If\ b\ \Then\ S_1\ \Else\ S_2}{\phi}}}
\\~\\

\Rule{Unfold}
{\sequent{\Gamma}{\judge{S}{\phi [\mu X. \phi / X]}}}
{\sequent{\Gamma}{\judge{S}{\mu X. \phi}}} ~~~~~
\qquad
\Rule{Cons}
{\sequent{\Gamma}{\judge{S}{\phi'}}}
{\sequent{\Gamma}{\judge{S}{\phi}}}~~
$\phi' \models \phi$  
\\~\\

\RuleTwo{Call}
  {\judge{Y_m}{\phi_m}\not\in\Gamma\qquad m\,\{S_m\}\in T}
  {\sequent{\Gamma,\judge{Y_m}{\phi_m}}{\judge{S_m[\Skip;Y_m/m(),\Skip;Y_{m_1}/m_1(),\ldots,\Skip;Y_{m_n}/m_n()]}{\phi_m}}}
  {\sequent{\Gamma}{\judge{m()}{\mathit{Id}\chop\phi_m}}}
%
\caption{The rules of the proof calculus.}
\label{fig:calculus}
\end{figure}

\subsubsection{Rules}

The calculus has exactly one rule for each kind of \Rec\ statement,
except for statement variables which do not occur in initial judgments
to be proven, but are only created intermittently in proofs by the
(\textsc{Call}) rule. All statement rules are \emph{compositional} in
the sense that only the statement $S$ in focus without any context
appears in the conclusion.

The statement rules and two selected logical rules of the calculus are
shown in Figure~\ref{fig:calculus}. The remaining logical rules, in
particular the axioms, are the standard Gentzen-style ones and are
omitted.

To prove the judgment $\judge{S}{\phi}$ for a program
$\langle S,T\rangle$, we prove the sequent
$\sequent{}{\judge{S}{\phi}}$. All rules, except the (\textsc{Call})
rule, leave the antecedent $\Gamma$ invariant.

We first explain the two logical rules.
The (\textsc{Unfold}) rule is based on
Proposition~\ref{prop:fixed-point-unfolding} and is used to unfold
fixed-point formulas.
%
The consequence rule (\textsc{Cons}) permits to strengthen the trace
formula~$\phi$ in the succedent, i.e., the specification of the
program under verification. This is typically required to achieve a
suitable syntactic form of~$\phi$, or to strengthen an inductive
claim.  The rule assumes the existence of an \emph{oracle} for proving
the logical entailment between trace formulas.

The (\textsc{Skip}) and (\textsc{Assign}) rules handle the atomic
statements, using the two binary relation symbols defined in
Section~\ref{sec:binary-relations}.
The (\textsc{Seq}) rule is a rule for sequential composition. Observe
that it is compositional in the sense that \emph{no intermediate
  state} between $S_1$ and $S_2$ needs to be considered.

The (\textsc{If}) rule is a compositional rule for conditional
statements. The trace formulas $\lnot b\lor\phi$ in the left premise
(and $b\lor\phi$ in the right one) might at first appear
counter-intuitive. Formula $\lnot b\lor\phi$ is read as follows: We
need not consider program $S_1$ for any trace, where $\lnot b$ holds
in the beginning, because these traces relate to $S_2$; otherwise,
$\phi$ must hold. A more intuitive notation would be
$b\rightarrow\phi$, but we refrain from introducing implication in our
logic.

Unsurprisingly, the (\textsc{Call}) rule is the most complex.  We
associate with each declaration of a method~$m$ in~$T$ a unique
statement variable $Y_m$. The antecedent $\Gamma$ contains judgments
of the form $\judge{Y_m}{\phi_m}$.  One can think of the~$Y_m$ as a
generic continuation of any recursive call to~$m$ of which we know
that it must conform to its contract~$\phi_m$. Once this conformance
has been established, the fact is memorized in the antecedent
$\Gamma$. Therefore, the (\textsc{Call}) rule is triggered only when a
call to procedure $m()$ is encountered the first time.  This is
ensured by the condition $\judge{Y_m}{\phi_m}\not\in\Gamma$. To avoid
having to apply the call rule again to recursive calls of $m()$, all
such calls in the body $S_m$ are replaced with $\Skip;Y_m$, where the
$\Skip$ models unfolding and $Y_m$ is justified by the assumption in
$\Gamma$. Likewise, any other $\judge{Y_{m_i}}{\phi_{m_i}}\in\Gamma$
triggers an analogous substitution. Now the procedure body
$S_m[\ldots]$ in the premise contains at most procedure calls to $m'$
that do not occur in $\Gamma$.



If a different judgment than $\mathit{Id}\chop\phi_m$ is to be proven,
then rule (\textsc{Cons}) must be applied before (\textsc{Call}) to
achieve the required shape.

\def\defaultHypSeparation{\hskip 2em}

\begin{figure*}
    \begin{prooftree}
      \AxiomC{$\sequent{\cdots}{\judge{\Skip}{\mathit{Id}}$}}
      \AxiomC{$\sequent{\judge{Y_{\mathit{even}}}{\phi_{\mathit{even}}},\judge{Y_{\mathit{odd}}}{\phi_{\mathit{odd}}}}{\judge{Y_{\mathit{even}}}{\phi_{\mathit{even}}}}$}
      \LeftLabel{\textsc{Cons}}
      \UnaryInfC{$\sequent{\judge{Y_{\mathit{even}}}{\phi_{\mathit{even}}},\judge{Y_{\mathit{odd}}}{\phi_{\mathit{odd}}}}{\judge{Y_{\mathit{even}}}{\mu X_{\mathit{even}}.(\cdots\chop\phi_{\mathit{odd}})}}$}
      \LeftLabel{\textsc{Seq}}
      \BinaryInfC{$\sequent{\judge{Y_{\mathit{even}}}{\phi_{\mathit{even}}},\judge{Y_{\mathit{odd}}}{\phi_{\mathit{odd}}}}{\Skip;\judge{Y_{\mathit{even}}}{\mathit{Id}\chop\mu X_{\mathit{even}}.(\cdots\chop\phi_{\mathit{odd}})}}$}
      \UnaryInfC{$\vdots$}
      \UnaryInfC{$\sequent{\judge{Y_{\mathit{even}}}{\phi_{\mathit{even}}},\judge{Y_{\mathit{odd}}}{\phi_{\mathit{odd}}}}{\judge{S_{\mathit{odd}}[\Skip;Y_{\mathit{even}}/\mathit{even}()]}{\phi_{\mathit{odd}}}}$}
      \LeftLabel{\textsc{Call}}
      \UnaryInfC{$\sequent{\judge{Y_{\mathit{even}}}{\phi_{\mathit{even}}}}{\judge{\mathit{odd}()}{\stfs{\mathit{odd}()}}}$}
      \LeftLabel{\textsc{Cons}}
      \UnaryInfC{$\sequent{\judge{Y_{\mathit{even}}}{\phi_{\mathit{even}}}}{\judge{\mathit{odd}()}{\mathit{Id}\chop\mu X_{\mathit{odd}}.((x = 0 \land \mathit{Id} \chop \mathit{Sb}^{0}_{y}) \lor(x \neq 0 \land 
     \mathit{Id} \chop \mathit{Sb}^{x-1}_{x}\chop\mathit{Id}\chop\phi_{\mathit{even}})}}$}
      \UnaryInfC{$\vdots$}
      \UnaryInfC{$\sequent{\judge{Y_{\mathit{even}}}{\phi_{\mathit{even}}}}{\judge{S_{\mathit{even}}}{\phi'_{\mathit{even}}[\phi_{\mathit{even}}/X_{\mathit{even}}]}}$}
      \LeftLabel{\textsc{Unfold}}
      \UnaryInfC{$\sequent{\judge{Y_{\mathit{even}}}{\phi_{\mathit{even}}}}{\judge{S_{\mathit{even}}}{\phi_{\mathit{even}}}}$}
      \LeftLabel{\textsc{Call}}
      \UnaryInfC{$\sequent{}{\judge{\mathit{even}()}{\stfs{\mathit{even}()}}}$}
    \end{prooftree}
  \caption{\label{fig:proof-rec}Proof of
    $\judge{\mathit{even}()}{\stfs{\mathit{even}()}}$.}
\end{figure*}

\begin{example}\label{ex:proof-rec}
  We prove the judgment
  $\judge{\mathit{even}()}{\stfs{\mathit{even}()}}$ for the program
  from Example~\ref{ex:rec-program} in Figure~\ref{fig:proof-rec}. We
  abbreviate the fixed-point formula in $\stfs{\mathit{even}()}$ from
  Example~\ref{ex:stf-rec} with $\phi_{\mathit{even}}$, so that
  $\stfs{\mathit{even}()}=\mathit{Id}\chop\phi_{\mathit{even}}$, the
  body of $\mathit{even}()$ with $S_{\mathit{even}}$, and similarly
  for $\mathit{odd}()$. Moreover, we abbreviate
  \begin{multline*}
    \phi'_{\mathit{even}}=\bigl(x = 0\land\mathit{Id}\chop \mathit{Sb}^{1}_{y}\bigr)\lor
    \bigl(x \neq 0\,\land\\
    \qquad\mathit{Id} \chop \mathit{Sb}^{x-1}_{x} \chop \mathit{Id}\chop\mu X_{\mathit{odd}}.((x = 0 \land \mathit{Id} \chop \mathit{Sb}^{0}_{y}) \lor(x \neq 0 \land 
    \mathit{Id} \chop \mathit{Sb}^{x-1}_{x}\chop \mathit{Id} \chop X_{\mathit{even}})\bigr)\bigr)
  \end{multline*}
  
  The proof starts with the call rule, followed by an unfold of the
  fixed point formula $\phi_{\mathit{even}}$. We now proceed with the
  other statement rules simultaneously on $S_{\mathit{even}}$ and the
  formula on the right until we encounter the call of $\mathit{odd}()$
  in $S_{\mathit{even}}$. Here we would like to apply the call rule to
  $\mathit{odd}()$, but we do not have $\phi_{\mathit{odd}}$ on the
  right, because the unfolding of $\phi_{\mathit{even}}$ went ``too
  deep''. To avoid a lengthy derivation at this point, we use the fact
  that trace formula on the right is \emph{equivalent} to
  $\phi_{\mathit{odd}}$, and use the consequence rule to obtain it.

  Now we descend into $\mathit{odd}()$, similarly as before; however,
  because the judgment
  $\judge{Y_{\mathit{even}}}{\phi_{\mathit{even}}}$ is present on the
  left, the call rule replaces $\mathit{even}()$ in
  $S_{\mathit{even}}$ with
  $\Skip;\judge{Y_{\mathit{even}}}{\phi_{\mathit{even}}}$. Finally, we
  encounter the statement variable $Y_{\mathit{even}}$, but again the
  fixed point formula on the right is ``too deep''. After a second
  transformation we close the proof with an axiom.
\end{example}

\begin{remark}\label{rem:while-rule}
  It is easy to derive a rule for loops from the (\textsc{Call}) rule
  using the encoding $m\,\{\If\ b\ \Then\ S;m()\ \Else\ \Skip\}$ given
  in Remark~\ref{rem:while-syn}. In a pure loop program no unprocessed
  recursive call except $m()$ ever occurs in the body, so the
  \textsc{Call}) rule is applicable with $\Gamma=\varnothing$.  Rule
  (\textsc{Call}) instantiated to~$m$ and a suitable~$\phi_m$ gives
  $ S_m[\Skip;Y_m/m()]=\If\ b\ \Then\ S;\Skip;Y_m\ \Else\ \Skip $ in
  the premise on the right, so its single premise becomes:
  $ \sequent{\judge{Y_m}{\phi_m}}{\judge{\If\ b\ \Then\ S;\Skip;Y_m\
      \Else\ \Skip}}{\phi_m} $.  Subsequent application of rule
  (\textsc{If}) yields the two premises
  $ \sequent{\judge{Y_m}{\phi_m}}{\judge{\Skip;S;\Skip;Y_m}}{\lnot
    b\lor\phi_m} \text{ and }
  \sequent{\judge{Y_m}{\phi_m}}{\judge{\Skip;\Skip}}{b\lor\phi_m}.  $.
  In each premise is a spurious $\Skip$ resulting from evaluating the
  method call which is only due to the encoding. In addition, the
  antecedent is not needed to prove the second premise and can be
  removed. After reordering and simplification, rule (\textsc{While})
  is obtained as:
  \[
    \Rule{While}
    {\sequent{\Gamma}{\judge{\Skip}{b \lor \phi}}\qquad
      \sequent{\Gamma,\, \judge{Y}{\phi}}
      {\judge{\Skip; S; Y}{\lnot b \lor \phi}}}
    {\sequent{\Gamma}{\judge{\While\ b\ \Do\ S}{\phi}}}
  \]
  where $\Gamma$ contains judgments of the form $\judge{Y}{\phi}$
  originating from while loops encountered previously. These are only
  needed in a proof in the presence of nested loops.
\end{remark}


\subsubsection{Semantics}

\begin{definition}[Calculus Semantics]
  \label{def:calculus-semantics}
  A judgment $\judge{S}{\phi}$ is termed \emph{valid}
  in~$\mathcal{I}$, denoted $\models_\mathcal{I} \judge{S}{\phi}$,
  whenever $\trcdenoti{S} \subseteq \fdenot{\phi}$.
  A sequent $\sequent{\Gamma}{\judge{S}{\phi}}$ is termed
  \emph{valid}, denoted $\semsequent{\Gamma}{\judge{S}{\phi}}$, if for
  every inter\-pretation~$\mathcal{I}$, $\judge{S}{\phi}$ is valid
  in~$\mathcal{I}$, whenever all judgments in~$\Gamma$ are valid
  in~$\mathcal{I}$.
\end{definition}

It is not possible to prove a judgment for $m()$ (or any other
statement) that is stronger than its strongest trace formula. In this
sense, $\stfs{m()}$ can be seen as a \emph{contract} for $m$, in fact
the \emph{strongest possible} contract. This is captured in the
following result:

\begin{corollary}[Strongest Trace Formula]
  \label{cor:stf-strongest}
  Let $S$ be a statement not involving any statement variables. Then
  the strongest trace formula $\stfs{S}$ of~$S$ entails any valid
  formula for~$S$.
  That is, if $\models \judge{S}{\phi}$, then $\stfs{S} \models \phi$.
\end{corollary}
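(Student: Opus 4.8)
The plan is to reduce Corollary~\ref{cor:stf-strongest} to the already-established Theorem~\ref{thm:stf-characterisation-rec} together with the (semantic) meaning of a valid judgment. Assume $\models \judge{S}{\phi}$, i.e., the sequent $\sequent{}{\judge{S}{\phi}}$ is valid. Since $S$ contains no statement variables, validity in the empty antecedent means simply $\trcdenoti{S} \subseteq \fdenot{\phi}$ for every interpretation $\mathcal{I}$; but for statement-variable-free $S$ the interpretation is immaterial, so this is just $\trcdenot{S} \subseteq \fdenot{\phi}$. Now invoke Theorem~\ref{thm:stf-characterisation-rec}, which gives $\fdenot{\stfh{\varnothing}{S}} = \trcdenot{S}$, i.e., $\fdenot{\stfs{S}} = \trcdenot{S}$ by Definition~\ref{def:stf}. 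Chaining the two, $\fdenot{\stfs{S}} = \trcdenot{S} \subseteq \fdenot{\phi}$.

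It then remains to observe that set inclusion of denotations is exactly the semantic entailment relation $\models$ between closed trace formulas: by the discussion following Proposition~\ref{prop:fixed-point-unfolding}, for closed formulas the valuation is immaterial and $\psi \models \chi$ holds iff $\fdenot{\psi} \subseteq \fdenot{\chi}$ (this is the same relation used as the side condition of rule (\textsc{Cons})). Here I should be slightly careful that $\stfs{S}$ is indeed a closed formula: the strongest trace formula of a statement with no free statement variables produces, by inspection of Figure~\ref{fig:stf}, only $\mu$-bound recursion variables $X_m$, so $\stfs{S}$ is closed and $\fdenot{\cdot}$ is well-defined for it. Hence $\fdenot{\stfs{S}} \subseteq \fdenot{\phi}$ yields $\stfs{S} \models \phi$, which is the claim.

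The argument is essentially a one-line corollary, so there is no real obstacle; the only point requiring a touch of care is the bookkeeping around interpretations $\mathcal{I}$ — making explicit that ``$\models \judge{S}{\phi}$'' (the sequent with empty $\Gamma$) unwinds via Definition~\ref{def:calculus-semantics} to $\trcdenoti{S} \subseteq \fdenot{\phi}$ for all $\mathcal{I}$, and that $\trcdenoti{S}$ collapses to $\trcdenot{S}$ because $S$ has no statement variables (the hypothesis of the corollary). With that spelled out, the result follows immediately from Theorem~\ref{thm:stf-characterisation-rec}.
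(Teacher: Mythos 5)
Your proposal is correct and follows exactly the paper's (one-line) argument: unfold Definition~\ref{def:calculus-semantics} for the empty antecedent, use the absence of statement variables to collapse $\trcdenoti{S}$ to $\trcdenot{S}$, and chain with Theorem~\ref{thm:stf-characterisation-rec} to get $\fdenot{\stfs{S}}=\trcdenot{S}\subseteq\fdenot{\phi}$, i.e.\ $\stfs{S}\models\phi$. The extra bookkeeping you supply (closedness of $\stfs{S}$, the reading of $\models$ as inclusion of denotations) is exactly what the paper leaves implicit.
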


\begin{proof}
  The result follows directly from
  Theorem~\ref{thm:stf-characterisation-rec} in the Appendix and
  Definition~\ref{def:calculus-semantics}.  
\end{proof}


\subsection{Soundness and Relative Completeness of the Calculus} 

Our proof system is \emph{sound}, in the sense that it can only
derive valid sequents.

\begin{theorem}[Soundness]
  \label{thm:calculus-soundness}
  The proof system is sound: every derivable sequent is valid.
\end{theorem}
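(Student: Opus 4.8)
The plan is to proceed by induction on the structure of the derivation of the sequent $\sequent{\Gamma}{\judge{S}{\phi}}$, showing that each rule preserves validity: if all premises are valid sequents (in the sense of Definition~\ref{def:calculus-semantics}), then so is the conclusion. Fix an arbitrary interpretation $\mathcal{I}$ of the statement variables, assume all judgments in $\Gamma$ are valid in $\mathcal{I}$, and show $\trcdenoti{S}\subseteq\fdenot{\phi}$. For the axioms (standard Gentzen rules and the ones closing with $\judge{Y_m}{\phi_m}\in\Gamma$) validity is immediate. For (\textsc{Skip}) and (\textsc{Assign}) the claim reduces to $\trcdenoti{\Skip}=\setdef{s\cdot s}{s\in\State}=\fdenot{\mathit{Id}}$ and similarly for assignment, directly from the semantic equations of Figure~\ref{fig:while-tr} and the definitions of $\mathit{Id}$, $\mathit{Sb}^a_x$ in Section~\ref{sec:binary-relations}.

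The inference rules for compound statements follow from compositionality of both semantics. For (\textsc{Seq}), $\trcdenoti{S_1;S_2}=\trcdenoti{S_1}\chop\trcdenoti{S_2}\subseteq\fdenot{\phi_1}\chop\fdenot{\phi_2}=\fdenot{\phi_1\chop\phi_2}$, using the induction hypotheses and monotonicity of $\chop$ with respect to $\subseteq$. For (\textsc{If}), I would unfold $\trcdenoti{\If\ b\ \Then\ S_1\ \Else\ S_2}=(\sharp\trcdenoti{S_1})\restr{b}{}\cup(\sharp\trcdenoti{S_2})\restr{\neg b}{}$ and show each disjunct lands in $\fdenot{\phi}$: a trace $s\cdot s\cdot\sigma$ in the first set has $\bexdenot{b}(s)=\ttt$, so $s\models b$ and hence $s\cdot s\cdot\sigma\in\fdenot{\lnot b\lor\phi}$ forces $s\cdot s\cdot\sigma\in\fdenot{\phi}$; the induction hypothesis for the left premise gives $\sharp\trcdenoti{S_1}=\trcdenoti{\Skip;S_1}\subseteq\fdenot{\lnot b\lor\phi}$ (noting $\trcdenoti{\Skip;S_1}=\sharp\trcdenoti{S_1}$ because prefixing $\Skip$ duplicates the initial state), and symmetrically for the right premise. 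For (\textsc{Unfold}), validity transfers verbatim using Proposition~\ref{prop:fixed-point-unfolding}, $\fdenot{\mu X.\phi}=\fdenot{\phi[\mu X.\phi/X]}$. For (\textsc{Cons}), the side condition $\phi'\models\phi$ means $\fdenot{\phi'}\subseteq\fdenot{\phi}$, so $\trcdenoti{S}\subseteq\fdenot{\phi'}\subseteq\fdenot{\phi}$.

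The main obstacle is the (\textsc{Call}) rule, which encodes fixed-point induction. Here I would argue as follows: let $\rho_0$ be the least fixed point from Definition~\ref{def:while-tr} and recall $\trcdenoti{m()}=\mathit{Id}\chop\rho_0(m)$ up to the interpretation bookkeeping. The premise sequent $\sequent{\Gamma,\judge{Y_m}{\phi_m}}{\judge{S_m[\Skip;Y_m/m(),\ldots]}{\phi_m}}$ says, semantically, that for any $\mathcal{I}'$ making $\Gamma$ and $\judge{Y_m}{\phi_m}$ valid, we have $\trcdenot{S_m}$, with recursive calls to $m,m_1,\dots,m_n$ interpreted via $\mathcal{I}'(Y_m),\dots$, contained in $\fdenot{\phi_m}$. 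Instantiating $\mathcal{I}'$ so that $\mathcal{I}'(Y_m)=\fdenot{\phi_m}$ (valid by reflexivity) and $\mathcal{I}'(Y_{m_i})$ matches any $\judge{Y_{m_i}}{\phi_{m_i}}\in\Gamma$, this exhibits $\fdenot{\phi_m}$ as a \emph{pre-fixed point} of the relevant component of the body functional $H$ — more precisely, $\sharp\trcdenoth{S_m}_{\rho}\subseteq\fdenot{\phi_m}$ whenever $\rho(m_i)\subseteq\fdenot{\phi_{m_i}}$. By the Knaster–Tarski characterisation of $\rho_0$ as the least pre-fixed point of $H$, we conclude $\rho_0(m)\subseteq\fdenot{\phi_m}$, hence $\trcdenoti{m()}=\mathit{Id}\chop\rho_0(m)\subseteq\fdenot{\mathit{Id}\chop\phi_m}$, which is the conclusion. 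The delicate points to get right are (i) matching the syntactic substitution $[\Skip;Y_m/m(),\ldots]$ with the semantic unfolding of $H$ — the leading $\Skip$ exactly accounts for the $\sharp$ and the $\mathit{Id}\chop$ in the call equation — and (ii) handling the other open procedures $m'$ not in $\Gamma$, whose calls remain inside $S_m[\ldots]$ and are dealt with by nested applications of (\textsc{Call}) further up the derivation, so the induction hypothesis already covers them; a clean way to organise this is to prove the stronger statement that all procedure components of $\rho_0$ are simultaneously bounded by the corresponding $\phi$'s, i.e. to phrase the induction on the collection of memorised contracts in $\Gamma$.
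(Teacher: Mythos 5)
Your proposal is correct and takes essentially the same route as the paper: local soundness of each rule, with the (\textsc{If}) case resting on the equivalence between $(\trcdenoti{S})\!\!\mid_b\,\subseteq\fdenot{\phi}$ and $\trcdenoti{S}\subseteq\fdenot{\lnot b\lor\phi}$ together with $\trcdenoti{\Skip;S}=\sharp\trcdenoti{S}$, and the (\textsc{Call}) case discharged by fixed-point induction on the body functional $H$ (with Beki\v{c}'s principle reducing the mutually recursive case to one procedure at a time). One bookkeeping correction on (\textsc{Call}): since $H$ already applies $\sharp$, the paper has $\trcdenot{m()}=\rho_0(m)$ (not $\mathit{Id}\chop\rho_0(m)$), and the invariant for the fixed-point induction must be $\fdenot{\mathit{Id}\chop\phi_m}$ rather than $\fdenot{\phi_m}$ --- your claims $\rho_0(m)\subseteq\fdenot{\phi_m}$ and $\trcdenoti{m()}=\mathit{Id}\chop\rho_0(m)$ are each off by one $\sharp$ in compensating directions, which is precisely the delicate point~(i) you flag and is repaired exactly as the paper does.
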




Our proof system is \emph{complete}, in the sense that every valid
sequent can be derived, \emph{relative} to an oracle, used by rule
(\textsc{Cons}), that provides logical entailment between trace
formulas.

By Theorem~\ref{thm:stf-characterisation-rec} and 
Definition~\ref{def:calculus-semantics} we know that every judgment
of the shape $\judge{S}{\stfs{S}}$ is valid. 
We next show that all such judgments are derivable in our proof system.
Together with Corollary~\ref{cor:stf-strongest}, we obtain completeness,
with the help of the rule (\textsc{Cons}).
By definition:
$
  \stfs{m()}=\mathit{Id}\chop\mu X_m.\,\stfh{\{m\}}{S_m} 
$,
where $S_m$ is the body of $m$. We abbreviate
$\phi_m=\mu X_m.\,\stfh{\{m\}}{S_m}$ and in the following use
$\stfs{m()}=\mathit{Id}\chop\phi_m$ without mentioning it explicitly.

\begin{theorem}[Existence of Canonical Proof]
  \label{thm:existence-rec}
  Let $\langle S, T\rangle$ be a \Rec\ program with $n$ many method
  declarations in~$T$, and let
  $
    \Gamma=\{\judge{Y_{m_1}}{\phi_{m_1}},\ldots,\judge{Y_{m_n}}{\phi_{m_n}}\}.
  $
  Then, the judgment
  $
    \sequent{\Gamma}{\judge{S[\Skip;Y_{m_1}/m_1(),\ldots,\Skip;Y_{m_n}/m_n()]}{\stfs{S}}}
  $
  is derivable in our calculus. 
\end{theorem}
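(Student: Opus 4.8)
The plan is to proceed by induction on the structure of the statement $S$, building the derivation of $\sequent{\Gamma}{\judge{S[\overline{\Skip;Y_m/m()}]}{\stfh{\varnothing}{S}}}$ compositionally. The atomic cases (\Skip, $x:=a$) are immediate from the corresponding axioms, since $\stfh{\overline{X}}{\Skip}=\mathit{Id}$ and $\stfh{\overline{X}}{x:=a}=\mathit{Sb}^a_x$ match exactly. For sequential composition $S_1;S_2$, the defining equation $\stfh{\overline{X}}{S_1;S_2}=\stfh{\overline{X}}{S_1}\chop\stfh{\overline{X}}{S_2}$ aligns perfectly with the (\textsc{Seq}) rule, so the induction hypothesis on $S_1$ and $S_2$ closes the case directly. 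For the conditional, the (\textsc{If}) rule produces premises $\judge{\Skip;S_i}{(\lnot b\text{ or }b)\lor\phi}$, and one must match $\stfh{\overline{X}}{\If\ b\ \Then\ S_1\ \Else\ S_2}=(b\land\mathit{Id}\chop\stfh{\overline{X}}{S_1})\lor(\lnot b\land\mathit{Id}\chop\stfh{\overline{X}}{S_2})$. Here a small amount of logical massaging via (\textsc{Cons}) is needed: from the disjunctive strongest-trace formula one extracts, for the left premise, the formula $\lnot b\lor(\mathit{Id}\chop\stfh{\overline{X}}{S_1})$, applies (\textsc{Seq}) to split off the leading $\mathit{Id}$ (handled by (\textsc{Skip})), and invokes the IH on $S_1$; symmetrically on the right. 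This uses the entailment oracle only for propositional-level rearrangements.

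The crux is the procedure-call case $m()$. The definition gives two subcases. If $m\in\overline{X}$, then $\stfh{\overline{X}}{m()}=\mathit{Id}\chop X_m$; but in the statement being proven, $m()$ has already been replaced by $\Skip;Y_m$, and the judgment $\judge{Y_m}{\phi_m}$ with $\phi_m=\mu X_m.\stfh{\{m\}}{S_m}$ sits in $\Gamma$ — however we need $\judge{Y_m}{X_m}$ where $X_m$ has been substituted. The key observation is that the outermost invocation's (\textsc{Call}) application instantiates the recursion variable $X_m$ by $\phi_m$ itself (fixed-point induction: the assumption $\judge{Y_m}{\phi_m}$ discharges the recursive occurrences), so after the global substitution $[\phi_m/X_m]$ the target formula on a nested call becomes $\mathit{Id}\chop\phi_m$, which is provable from $\judge{Y_m}{\phi_m}\in\Gamma$ via (\textsc{Skip}), (\textsc{Seq}), and the axiom. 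For the first-call subcase $m\notin\overline{X}$, we have $\stfh{\overline{X}}{m()}=\mathit{Id}\chop\mu X_m.\stfh{\overline{X}\cup\{m\}}{S_m}=\mathit{Id}\chop\phi_m$ (noting $\overline{X}$ is irrelevant to the body's STF up to the recursion variables actually used), and we apply (\textsc{Call}) with contract $\phi_m$: its premise is $\sequent{\Gamma,\judge{Y_m}{\phi_m}}{\judge{S_m[\overline{\Skip;Y/\cdot}]}{\phi_m}}$. We then apply (\textsc{Unfold}) to rewrite $\phi_m$ as $\stfh{\{m\}}{S_m}[\phi_m/X_m]$, and invoke the induction hypothesis on $S_m$ — which is a structurally smaller statement — now with the enlarged context $\Gamma'=\Gamma\cup\{\judge{Y_m}{\phi_m}\}$. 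This is where the induction must be set up carefully: the statement of the theorem must be strengthened (or phrased relative to an arbitrary $\overline{X}$ and matching $\Gamma$) so that the IH applies to $S_m$ with the context that (\textsc{Call}) just created.

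The main obstacle, and the step requiring the most care, is precisely this bookkeeping of the recursion-variable parameter $\overline{X}$ against the statement-variable context $\Gamma$: one must prove a generalized statement asserting that for every $\overline{X}\subseteq M$ and the corresponding $\Gamma_{\overline{X}}=\{\judge{Y_m}{\phi_m}\mid m\in\overline{X}\}\cup(\text{entries for uncalled procedures})$, the judgment $\judge{S[\text{replace all declared-procedure calls appropriately}]}{\stfh{\overline{X}}{S}[\overline{\phi_m/X_m}]_{m\in\overline{X}}}$ is derivable. The subtlety is that $\stfh{\overline{X}}{S}$ contains free recursion variables $X_m$ for $m\in\overline{X}$, whereas the calculus works with closed formulas, so the substitution $[\phi_m/X_m]$ closing these off must be threaded consistently through every inductive case, and one must check it commutes with $\chop$, $\lor$, $\land$, and — via Proposition~\ref{prop:fixed-point-unfolding} — with the fixed-point unfolding used in the call case. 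Once this invariant is correctly formulated, every case is routine; getting the invariant right is the whole game. Finally, specializing to $\overline{X}=\varnothing$ (so $\stfh{\varnothing}{S}$ is already closed and $\Gamma=\Gamma$ as in the theorem statement) yields the claim.
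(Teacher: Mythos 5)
Your proposal follows the paper's own proof closely: the same structural case analysis, the same use of (\textsc{Seq}) and of (\textsc{If}) with (\textsc{Cons}) massaging, and the same treatment of calls via (\textsc{Call}) followed by (\textsc{Unfold}), with the two subcases ($m$ already assumed in $\Gamma$ versus first encounter) handled exactly as in the paper. Your idea of threading the substitution $[\phi_m/X_m]$ through a generalized invariant is a slightly different piece of bookkeeping from the paper's, which instead repairs the mismatch at nested calls $m'()$ with a (\textsc{Cons}) step justified by the semantic equality $\fdenot{\stfs{S}}=\fdenot{\stfh{\overline{X}}{S}}$ together with soundness of unfolding; both routes are workable.

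There is, however, one genuine gap: the well-foundedness of your induction. You claim that in the first-call subcase you invoke the induction hypothesis on $S_m$, ``which is a structurally smaller statement.'' This is false: $m()$ is an atomic statement and $S_m$ is its (arbitrarily large, possibly recursive) body, so $S_m$ is not a substatement of $m()$, and plain structural induction on $S$ breaks down precisely here. What actually makes the recursion terminate is that each application of (\textsc{Call}) strictly enlarges $\Gamma$, and $\Gamma$ is bounded by the number $N$ of declared procedures; the paper makes this precise by inducting on the lexicographic order on $\langle N-\length{\Gamma},\length{S}\rangle$, which decreases in every case (the second component in all rules, the first component when (\textsc{Call}) is applied to a fresh procedure). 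Your generalized invariant over $\overline{X}$ already contains the ingredients for this argument, but you never state the measure, and the justification you do give is incorrect; without such a measure the induction as described is not well-founded.
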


\begin{corollary}[Relative Completeness]
  \label{cor:relative-completeness-rec}
  The proof system is relatively complete: for every \Rec\ program $S$
  without statement variables and every closed formula $\phi$, any
  valid judgment of the form $\judge{S}{\phi}$ is derivable in the
  proof system.
\end{corollary}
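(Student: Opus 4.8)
The plan is to reduce the general judgment $\judge{S}{\phi}$ to the strongest trace formula of $S$ and then appeal to the canonical-proof construction. Since $S$ contains no statement variables and $\models\,\judge{S}{\phi}$, Corollary~\ref{cor:stf-strongest} yields $\stfs{S}\models\phi$. Hence, by rule (\textsc{Cons}) applied at the root with the oracle-supplied entailment $\stfs{S}\models\phi$, it suffices to derive the sequent $\sequent{}{\judge{S}{\stfs{S}}}$. So the real task is to show that \emph{every} judgment $\judge{S}{\stfs{S}}$ with statement-variable-free $S$ is derivable from the empty antecedent; the Corollary then follows immediately.

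I would establish $\sequent{}{\judge{S}{\stfs{S}}}$ by induction on the structure of $S$, recalling $\stfs{S}=\stfh{\varnothing}{S}$ from Definition~\ref{def:stf}. The atomic cases are closed directly by (\textsc{Skip}) and (\textsc{Assign}), since $\stfh{\varnothing}{\Skip}=\mathit{Id}$ and $\stfh{\varnothing}{x:=a}=\mathit{Sb}^{a}_{x}$. For $S_1;S_2$, Figure~\ref{fig:stf} gives $\stfh{\varnothing}{S_1;S_2}=\stfh{\varnothing}{S_1}\chop\stfh{\varnothing}{S_2}$, so rule (\textsc{Seq}) together with the induction hypotheses on $S_1$ and $S_2$ closes the case. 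For $\If\ b\ \Then\ S_1\ \Else\ S_2$ I would apply rule (\textsc{If}); its premises demand $\judge{\Skip;S_1}{\lnot b\lor\stfs{S}}$ and $\judge{\Skip;S_2}{b\lor\stfs{S}}$. Now $\stfh{\varnothing}{\Skip;S_i}=\mathit{Id}\chop\stfh{\varnothing}{S_i}$ is semantically contained in $\lnot b\lor\stfs{S}$ (resp.\ $b\lor\stfs{S}$): on a trace whose first state satisfies $b$ the corresponding disjunct of $\stfs{S}$ applies, and otherwise the $\lnot b$ (resp.\ $b$) disjunct does. Hence (\textsc{Skip}), (\textsc{Seq}), the induction hypothesis, and one use of (\textsc{Cons}) discharge both premises.

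The one genuinely delicate case is the procedure call $S=m()$, and I expect it to be the main obstacle. Here $\stfs{m()}=\mathit{Id}\chop\phi_m$ with $\phi_m=\mu X_m.\,\stfh{\{m\}}{S_m}$, and applying rule (\textsc{Call}) reduces the goal to $\sequent{\judge{Y_m}{\phi_m}}{\judge{S_m[\Skip;Y_m/m()]}{\phi_m}}$. A plain structural induction no longer suffices, because $\phi_m$ is a least fixed point and $S_m$ contains the recursive call back to $m$ as well as, possibly, calls to further procedures. This subgoal is precisely the situation addressed by Theorem~\ref{thm:existence-rec}: after an (\textsc{Unfold}) of $\phi_m$ one processes $S_m[\Skip;Y_m/m()]$ against $\stfh{\{m\}}{S_m}[\phi_m/X_m]$ in lockstep with the statement rules, discharging each occurrence of $Y_m$ by the axiom using the antecedent judgment $\judge{Y_m}{\phi_m}$, and each further first-time call $m'()$ by another (\textsc{Call}) that extends the antecedent by $\judge{Y_{m'}}{\phi_{m'}}$ (after a (\textsc{Cons}) step to bring the succedent into the shape $\mathit{Id}\chop\phi_{m'}$). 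This recursion terminates because $T$ has finitely many declarations, after which the antecedent coincides with the $\Gamma$ of Theorem~\ref{thm:existence-rec}. Thus the procedure-call case is discharged by the construction underlying Theorem~\ref{thm:existence-rec}, and the hard part is exactly the bookkeeping: keeping the partially substituted bodies $S_{m_i}[\ldots]$ synchronized with the fixed-point unfoldings of the $\phi_{m_i}$ while the antecedent $\Gamma$ is built up one procedure at a time, in the manner illustrated by the derivation in Figure~\ref{fig:proof-rec}. With $\sequent{}{\judge{S}{\stfs{S}}}$ established, one final application of (\textsc{Cons}) using $\stfs{S}\models\phi$ completes the proof.
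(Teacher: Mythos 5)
Your proposal is correct and follows essentially the same route as the paper: reduce $\judge{S}{\phi}$ to $\judge{S}{\stfs{S}}$ via Corollary~\ref{cor:stf-strongest} and one application of rule (\textsc{Cons}), then derive $\judge{S}{\stfs{S}}$ by the canonical-proof construction of Theorem~\ref{thm:existence-rec}. The structural induction and the lockstep treatment of the call case that you spell out are precisely the content of the paper's proof of Theorem~\ref{thm:existence-rec}, so your extra detail merely inlines what the paper cites.
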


\begin{proof}
  By Theorem~\ref{thm:stf-characterisation-rec} we know that
  $\stfs{S}\models\phi$, so we can use rule~(\textsc{Cons}) to obtain
  $\judge{S}{\stfs{S}}$, which is derivable by
  Theorem~\ref{thm:existence-rec}. 
\end{proof}

Compared to a typical completeness proof of first-order Dynamic Logic,
where the invariant is constructed as equations over the G\"{o}delized
program in the loop, the argument is much simpler, because the
inductive specification logic is sufficiently expressive to
characterize recursive programs (and loops as a special case).
First-order quantifiers are not even necessary, so our logic
is \emph{not} first-order, even though it is obviously Turing-hard and
thus undecidable.


\section{From Trace Formulas to Programs}
\label{sec:canonical-programs}

In Section~\ref{sec:stf} we showed that any \Rec\ program $S$ can be
translated into a trace formula $\stfs{S}$ that has the same semantics
in terms of traces. Now we look at the other direction: Given a trace
formula~$\phi$, can we construct a \emph{canonical} program
$\can{\phi}$ that has the same semantics in terms of traces?
In general, this is not possible, as the following example shows:

\begin{example}\label{ex:nondet}
  Consider the trace formula:
  $
    \mathit{Sb}_y^0\chop\mu X.\left(\mathit{Id}\lor\mathit{Sb}_y^{y+1}\chop X\right)
  $.
  Its semantics are the traces that count $y$ up from $0$ to any
  finite number. It is not possible to model the non-deterministic
  choice in the fixed point formula directly in \Rec, because the
  number of calls is unbounded.

  There is a \Rec\ program that produces exactly the same traces as
  the formula above, up to auxiliary variables, for example,
  $y:=0;m()$, where $m$ is declared as:
  $m()\,\{\If\ (y\leq x)\ \Then\ y:=y+1;m()\ \Else\ \Skip
  \}$. However, to transform an arbitrary formula with unbounded
  non-determinism in the number of calls to an equivalent one with
  non-deterministic initialization, is difficult and not natural.
\end{example}


\subsection{\Rec\ Programs with Non-deterministic Choice}
\label{sec:recast-programs}

To achieve a \emph{natural} translation from the trace logic to
canonical programs, it is easiest to introduce \emph{non-deterministic
  choice} in the form of a statement
$\If\ \ast\ \Then\ S_1\ \Else\ S_2$. The extension of language \Rec\
with the corresponding grammar rule is called $\Rec^\ast$.

The SOS rules for non-deterministic choice are:
\begin{center}
\Rule{$\ast$-$i$}
  {-}
  {\SOSConfTrans{\If\ \ast\ \Then\ S_1\ \Else\ S_2}{s}{S_i}{s}} \quad $i \in \set{1, 2}$ 
\end{center}

The finite-trace semantics of non-deterministic choice is:
\begin{center}
  $   \trcdenot{\If\ \ast\ \Then\ S_1\ \Else\ S_2}   \defeq  
      \sharp \trcdenot{S_1} \>\cup\>\> 
      \sharp \trcdenot{S_2}      $
\end{center}

The extension of Theorem~\ref{thm:semantics-correctness-rec} for
non-deterministic choice is completely straightforward.
Likewise, the theory of strongest trace formulas is easy to extend:
%
\begin{center}
  $\stfs{\If\ \ast\ \Then\ S_1\ \Else\ S_2} \defeq \mathit{Id} \chop
  \stfs{S_1} \lor \mathit{Id} \chop \stfs{S_2}$
\end{center}

It is easy to adapt the proof of
Theorem~\ref{thm:stf-characterisation-rec}.
The corresponding calculus rule is:
\begin{center}
  $
  \Rule{If-$\ast$}
  {\sequent{\Gamma}{\judge{\Skip; S_1}{\phi}} \quad
   \sequent{\Gamma}{\judge{\Skip; S_2}{\phi}}}
  {\sequent{\Gamma}{\judge{\If\ \ast\ \Then\ S_1\ \Else\ S_2}{\phi}}} 
  $
\end{center}

It is also easy to extend the proofs of
Theorem~\ref{thm:calculus-soundness} and
Theorem~\ref{thm:existence-rec}.
A more problematic aspect of the translation to canonical programs
concerns formulas of the form $\phi_1\land\phi_2$, because there is no
natural programming construct that computes the intersection of
traces. But in Definition~\ref{def:stf} general conjunction is not
required, so without affecting the results in previous sections we can
restrict the syntax of trace formulas in
Definition~\ref{def:logic-syntax} to $p\wedge\phi$.

A final issue are the programs that characterize a trace formula of
the form~$p$ with semantics $\fdenot{p}=\State^+\restr{p}{}$. A
program that produces such traces requires a \Any\ statement that
resets \emph{all} variables to an arbitrary value. This goes beyond
non-deterministic choice quite a bit, but luckily, it is not required:
As seen above, trace formulas of the form $p$ occur only as
subformulas of $p\wedge\phi$. Further, formulas of the form
$p\vee\phi$ occur only as intermediate formulas in derivations, and
nowhere else.
Altogether, for the purpose of mapping formulas to programs, we can
leave out the production for~$p$.
The grammar in Definition~\ref{def:logic-syntax} is thus simplified
to:
$$ \phi \>::=\> 
                \mathit{Id} \mid \mathit{Sb}^a_x \mid 
                X \mid 
                p \wedge \phi \mid 
                \phi \vee \psi \mid 
                \phi \chop \psi \mid 
                \mu X. \phi
$$


In addition, we assume without loss of generality that all recursion
variables in a trace formula have unique names.


\subsection{Canonical Programs}
\label{sec:canonical-programs-1}

\begin{definition}[Canonical Program]
\label{def:can}
  Let $\phi$ be trace formula. The \emph{canonical program}
  for~$\phi$, denoted $\can{\phi}=\langle S_\phi,T_\phi\rangle$, is
  inductively defined as follows:
  \begin{align*}
    \can{\mathit{Id}} & \defeq \langle\Skip,\epsilon\rangle &
    \can{p\wedge\phi} & \defeq \langle\If\ p\ \Then\ S_\phi\ \Else\ \Abort,T_\phi\rangle \\
    \can{\mathit{Sb}^a_x} & \defeq \langle x:=a,\epsilon\rangle &
    \can{\phi\vee\psi} & \defeq \langle\If\ \ast\ \Then\ S_\phi\ \Else\ S_\psi,T_\phi\, T_\psi\rangle\\
    \can{\phi\chop\psi} & \defeq \langle S_\phi;S_\psi,T_\phi\, T_\psi\rangle &
    \can{\mu X.\phi} & \defeq \langle m_X(),T_\phi\,\{m_X \{S_\phi\}\}\rangle\\
    \can{X} & \defeq \langle m_X(),\epsilon\rangle
  \end{align*}
\end{definition}

The definition contains the statement
\Abort. 
It is definable in the \Rec\ language as
$\Abort \defeq \mathit{abort()}$, with the declaration
$\mathit{abort}\,\{\mathit{abort()}\}$.
We assume that any table $T_\phi$ contains the declaration of procedure
$\mathit{abort}()$ when needed.

\begin{example}
We translate the formula in Example~\ref{ex:nondet}: 
$$ \can{\mathit{Sb}_y^0\chop\mu X.(\mathit{Id}\lor\mathit{Sb}_y^{y+1}\chop X)} =
   \langle y:=0;m_X(),\,T\rangle $$
where $T=m_X\,\{\If\ \ast\ \Then\ \Skip\ \Else\ y:=y+1;m_X()\}$.
\end{example}


\begin{proposition}\label{prop:any}
  We have $\trcdenot{\Abort}=\varnothing$.
\end{proposition}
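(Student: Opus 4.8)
The plan is to reduce the claim to the (standard) fact that the procedure $\mathit{abort}$ diverges on every input state, which can be established either denotationally or through the SOS and Theorem~\ref{thm:semantics-correctness-rec}; I would include the denotational argument as the primary one, since it matches the definition of $\trcdenot{\cdot}$ directly.

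\textbf{Denotational route.} By Definition~\ref{def:while-tr}, $\trcdenot{\Abort} = \trcdenoth{\mathit{abort}()}_{\rho_0} = \rho_0(\mathit{abort})$, where $\rho_0$ is the least fixed point of the associated functional $H$. For the declaration $\mathit{abort}\,\{\mathit{abort}()\}$, the $\mathit{abort}$-component of $H$ is the map $\rho \mapsto \sharp\,\trcdenoth{\mathit{abort}()}_\rho = \sharp\,\rho(\mathit{abort})$. Since $\rho_0 = \bigsqcup_{k} H^k(\varnothing^n)$ and $\sharp\,\varnothing = \varnothing$, a straightforward induction on $k$ shows that the $\mathit{abort}$-component of $H^k(\varnothing^n)$ is $\varnothing$ for every $k$; hence $\rho_0(\mathit{abort}) = \varnothing$, i.e. $\trcdenot{\Abort} = \varnothing$. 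I would note that this reasoning is insensitive to whatever other procedures the ambient table $T_\phi$ contains, since the fixed-point equation for the $\mathit{abort}$-component refers only to itself.

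\textbf{Operational route (alternative).} By Theorem~\ref{thm:semantics-correctness-rec} it suffices to show $\sosdenot{\mathit{abort}()} = \varnothing$. The only SOS rule applicable to a configuration $\config{\mathit{abort}()}{s}$ is (\textsc{Call}), which yields the \emph{intermediate} configuration $\config{\mathit{abort}()}{s}$ again and never a final configuration --- of the rules in Figure~\ref{fig:rec-sos}, only (\textsc{Skip}), (\textsc{Assign}), and (via (\textsc{Seq-1})) sequencing produce final configurations. Hence in any sequence of statements $S_0, S_1, \ldots$ with $S_0 = \mathit{abort}()$ and $\SOSConfTrans{S_i}{s_i}{S_{i+1}}{s_{i+1}}$ one has $S_i = \mathit{abort}()$ for all $i$, so no final step $\SOSConfFinTrans{S_{n-1}}{s_{n-1}}{s_n}$ is ever possible. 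By Definition~\ref{def:induced-finite-trace-semantics}, $\mathit{abort}()$ thus has no finite trace, i.e. $\sosdenot{\mathit{abort}()} = \varnothing$.

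\textbf{Main obstacle.} There is essentially none: the proposition merely records that $\Abort$ models nontermination, and each argument is a one-line divergence observation. The only points to state carefully are that the (\textsc{Call}) rule produces no final configuration, so $\mathit{abort}()$ has exactly one SOS successor --- itself --- and, in the denotational version, that $\sharp$ preserves the empty set and that the fixed-point equation for the $\mathit{abort}$-component is self-contained.
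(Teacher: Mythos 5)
Your proposal is correct. The paper in fact states Proposition~\ref{prop:any} without any proof, so there is nothing to compare against; either of your two arguments would fill that gap. The denotational route is the more natural one here, and it is sound: the $\mathit{abort}$-component of $H$ is the self-contained map $\gamma \mapsto \sharp\gamma$, and since $\sharp\varnothing = \varnothing$, your induction over the Kleene approximants $H^k(\varnothing^n)$ (whose join is $\rho_0$, as the paper's appendix records) gives $\rho_0(\mathit{abort}) = \varnothing$. One could state this even more directly --- $\varnothing$ is itself a fixed point of $\gamma\mapsto\sharp\gamma$ and is the bottom element, hence the least fixed point --- and appeal to Beki\v{c}'s Principle to isolate that component from the rest of the table, but your iteration argument already accomplishes the same thing. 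The operational route is likewise correct: (\textsc{Call}) is the only applicable rule at $\config{\mathit{abort}()}{s}$ and reproduces the same configuration, so no final configuration is reachable and $\sosdenot{\mathit{abort}()}=\varnothing$; Theorem~\ref{thm:semantics-correctness-rec} then transfers this to $\trcdenot{\cdot}$.
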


\begin{proposition}\label{prop:well}
  Let $\phi$ be an open trace formula, let $T_\phi'$ be declarations
  of its unbound recursion variables, and let
  $\can{\phi}=\langle S_\phi,T_\phi\rangle$. Then
  $\langle S_\phi,T_\phi\,T_\phi'\rangle$ is a well-defined
  $\Rec^\ast$ program.
\end{proposition}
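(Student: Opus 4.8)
The plan is to prove Proposition~\ref{prop:well} by structural induction on the trace formula~$\phi$, following exactly the case structure of Definition~\ref{def:can}. The statement to establish is that for an open formula $\phi$ with free recursion variables declared in $T_\phi'$, the program $\langle S_\phi, T_\phi\, T_\phi'\rangle$ is well-formed in the sense of Definition~\ref{def:rec-syntax}, i.e., every procedure called is declared, and procedure names are unique. I would also need to carry along, as part of the induction hypothesis, the bookkeeping facts that make these two conditions checkable: namely that the procedure names declared in $T_\phi$ are exactly $\{m_X \mid \mu X.\psi \text{ a subformula of } \phi\}$ (together with $\mathit{abort}$ when an $\Abort$ was introduced), that the procedure calls occurring in $S_\phi$ but not bound by $T_\phi$ are exactly the $m_X$ for $X$ free in $\phi$, and that these name sets are disjoint — which is where the global assumption from Section~\ref{sec:recast-programs} that all recursion variables have unique names does the work.

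First I would dispatch the base cases. For $\mathit{Id}$ and $\mathit{Sb}^a_x$ the program is $\langle\Skip,\epsilon\rangle$ resp.\ $\langle x:=a,\epsilon\rangle$, with no calls and no declarations, so well-formedness of $\langle S_\phi, T_\phi\, T_\phi'\rangle$ reduces to well-formedness of $T_\phi'$, which holds by hypothesis. For $X$, the program is $\langle m_X(), \epsilon\rangle$, which calls $m_X$; since $X$ is free, $T_\phi'$ declares $m_X$, so the call is matched. Then the inductive cases: for $\phi\chop\psi$ and $\phi\vee\psi$ we concatenate the tables $T_\phi\, T_\psi$ (and the statements, or wrap them in $\If\ \ast$); the calls in $S_\phi; S_\psi$ are covered by $T_\phi$, $T_\psi$, and the declarations $T_\phi'$ of the shared free variables, and uniqueness of names across $T_\phi$ and $T_\psi$ follows from the unique-naming convention, since the $m_X$ declared on each side correspond to disjoint sets of $\mu$-subformulas. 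For $p\wedge\phi$ we get $\If\ p\ \Then\ S_\phi\ \Else\ \Abort$ with table $T_\phi$; the only new call is $\mathit{abort}()$, which by the stated convention is assumed present in $T_\phi$ when needed, and $p\in\BExp$ is a legal guard. For $\mu X.\phi$ we form $\langle m_X(), T_\phi\,\{m_X\{S_\phi\}\}\rangle$: the new call $m_X()$ is matched by the new declaration; the free recursion variables of $\mu X.\phi$ are those of $\phi$ minus $X$, so relative to $T_\phi'$ (declaring those) together with the fresh declaration of $m_X$ and $T_\phi$, every call in $S_\phi$ is bound; and $m_X$ does not clash with any name in $T_\phi$ because, by unique naming, $X$ is distinct from every recursion variable occurring in $\phi$.

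The main obstacle I anticipate is not any single case but making the induction hypothesis strong enough and stating it precisely: the bare claim ``$\langle S_\phi, T_\phi\,T_\phi'\rangle$ is well-defined'' is not directly inductive, because in, say, the $\chop$-case one must know which procedures $S_\phi$ calls and which $T_\phi$ declares in order to verify the matching and the uniqueness after concatenation. So the real content is to formulate the strengthened invariant — declared names of $T_\phi$ $=$ $\mu$-bound subformulas' indices ($\cup\{\mathit{abort}\}$), unbound calls of $S_\phi$ $=$ indices of free recursion variables, and these are disjoint — prove it by the same induction, and observe that the well-formedness assertion is an immediate corollary once $T_\phi'$ supplies declarations for exactly the free-variable calls. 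A secondary, lighter point is to be explicit that the $\mathit{abort}$ declaration, shared across all sub-tables, does not cause a name clash: this is handled by treating ``$T_\phi$ contains $\mathit{abort}\{\mathit{abort}()\}$'' as a single shared declaration rather than one copy per $\Abort$, consistent with the remark following Definition~\ref{def:can}.
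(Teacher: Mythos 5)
Your proposal is correct and takes the same route as the paper, which proves the proposition by a structural induction on~$\phi$ showing that every call $m_X()$ in $\can{\phi}$ has exactly one matching declaration; the paper states this in one line and leaves the details implicit, whereas you spell out the strengthened invariant (declared names, unbound calls, disjointness via the unique-naming convention) that makes the induction go through. Nothing further is needed.
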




Evaluation of procedure calls and Boolean guards introduce \emph{stuttering 
steps} as compared to the corresponding logical operators of least
fixed-point recursion and disjunction, respectively. Hence, canonical
programs obtained from a formula~$\phi$ are statements, whose trace
semantics is equal to the one of~$\phi$, but \emph{modulo stuttering}: 
The two trace sets are equal when abstracting away the stuttering steps.
Further, we say that statement~$S_1$ \emph{refines} statement~$S_2$,
written $S_1 \preceq S_2$, when $\trcdenot{S_1} \subseteq \trcdenot{S_2}$.

\begin{definition}[Stuttering Equivalence]
  Let $\tilde{\sigma}$ be the \emph{stutter-free} version of a trace
  $\sigma$, i.e., where any subtrace of the form $s\cdot s\cdots s$
  has been replaced with $s$. Define
  $\widetilde{A}=\{\tilde{\sigma}\mid\sigma\in A\}$.  We say two trace
  sets $A$, $B$ are \emph{stutter-equivalent}, written
  $A\,\widetilde{=}\,B$, if $\widetilde{A}=\widetilde{B}$.
\end{definition}

It is easy to see that $A=B$ implies $A\,\widetilde{=}\,B$.
Let $\widetilde{\models}$ and $\widetilde{\preceq}$ 
denote entailment between formulas and refinement between
statements, respectively, both modulo stuttering equivalence.

Unsurprisingly, the characterization of canonical programs resembles
the one of strongest trace formulas, however, modulo stuttering
equivalence.

\begin{theorem}[Characterisation of Canonical Program]
\label{thm:can-characterisation-rec}
  Let $\phi$ be a closed trace formula, and let
  $\can{\phi}=\langle S_\phi,T_\phi\rangle$.
  %
  Then
  $\trcdenot{S_\phi} \,\widetilde{=}\, \fdenot{\phi}$.
\end{theorem}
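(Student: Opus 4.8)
The plan is to prove $\trcdenot{S_\phi}\,\widetilde{=}\,\fdenot{\phi}$ by structural induction on the (simplified) trace formula $\phi$, mirroring the proof of Theorem~\ref{thm:stf-characterisation-rec} but carrying a stuttering-equivalence relation instead of strict equality. Since $\can{\phi}$ is defined via procedure tables, the clean way to set this up is to generalise the statement to open formulas: I would introduce a valuation $\mathcal{V}$ for the recursion variables together with a matching interpretation $\mathcal{I}$ of the associated statement variables (or, equivalently, a matching relativised fixed-point environment $\rho$ for the procedure names $m_X$), and prove by induction on $\phi$ that $\trcdenoti{S_\phi}\,\widetilde{=}\,\logdenot{\phi}$ whenever $\mathcal{I}(m_X)\,\widetilde{=}\,\mathcal{V}(X)$ for every free $X$. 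The closed case then follows because a closed $\phi$ fixes $\mathcal{V}$ immaterially, and by Proposition~\ref{prop:well} the constructed program is well-defined.

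The base and easy inductive cases are routine. For $\mathit{Id}$ we have $\trcdenot{\Skip}=\{s\cdot s\mid s\in\State\}$ and $\fdenot{\mathit{Id}}=\{s\cdot s'\mid s'=s\}$, which are literally equal; likewise for $\mathit{Sb}^a_x$ via the \textsc{Assign} clause. For $\phi\chop\psi$ the program is $S_\phi;S_\psi$, whose trace semantics is $\trcdenot{S_\phi}\chop\trcdenot{S_\psi}$, and one invokes the inductive hypotheses together with the easy fact that $\chop$ is compatible with stuttering equivalence (collapsing repeated states inside each factor commutes with gluing on a shared state). For $\phi\vee\psi$ the canonical program is $\If\ \ast\ \Then\ S_\phi\ \Else\ S_\psi$, whose semantics is $\sharp\trcdenot{S_\phi}\cup\sharp\trcdenot{S_\psi}$; the extra $\sharp$ duplicates the initial state, i.e.\ inserts exactly one stuttering step, so modulo $\widetilde{=}$ this equals $\trcdenot{S_\phi}\cup\trcdenot{S_\psi}=\logdenot{\phi}\cup\logdenot{\psi}=\logdenot{\phi\vee\psi}$. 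For $p\wedge\phi$ the canonical program is $\If\ p\ \Then\ S_\phi\ \Else\ \Abort$; using Proposition~\ref{prop:any} ($\trcdenot{\Abort}=\varnothing$) and the $\If$ clause, its semantics is $(\sharp\trcdenot{S_\phi})\restr{p}{}$, which modulo stuttering is $\trcdenot{S_\phi}\restr{p}{}$; since restriction only inspects the first state, this is $\logdenot{\phi}\restr{p}{}=\logdenot{p}\cap\logdenot{\phi}=\logdenot{p\wedge\phi}$. The variable case $X$ is handled by the hypothesis relating $\mathcal{I}$ and $\mathcal{V}$, modulo the single $\mathit{Id}$-stutter that a procedure call introduces.

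The main obstacle is the fixed-point case $\mu X.\phi$, where $\can{\mu X.\phi}=\langle m_X(),\,T_\phi\,\{m_X\{S_\phi\}\}\rangle$. Here I must relate the least fixed point $\rho_0(m_X)$ of the program functional $H$ (Definition~\ref{def:while-tr}) to the least fixed point $\logdenot{\mu X.\phi}=\bigcap\{\gamma\mid\logdenotsubst{\phi}{X}{\gamma}\subseteq\gamma\}$ of the logical transformer — but only up to stuttering, which is delicate because the program functional inserts a stuttering step at every unfolding (the $\sharp$ in $H$ and the leading $\mathit{Id}$ in the call clause) whereas the logical transformer does not. The plan is: (i) use the inductive hypothesis to show that the program functional $\lambda\rho.\,\sharp\trcdenoth{S_\phi}_\rho$ and the logical transformer $\lambda\gamma.\logdenotsubst{\phi}{X}{\gamma}$ are ``stutter-conjugate'', i.e.\ agree modulo $\widetilde{=}$ on stutter-saturated arguments; (ii) observe that both are monotone and (Scott-)continuous, so their least fixed points are the respective $\omega$-suprema of iterates from $\varnothing$; (iii) prove by induction on the iteration index $k$ that the $k$-th program iterate is stutter-equivalent to the $k$-th logical iterate — the accumulated stuttering steps at each level collapse under $\widetilde{\cdot}$, which is exactly why one needs equivalence rather than equality here; (iv) pass to the union over $k$, using that $\widetilde{\cdot}$ commutes with directed unions, and finally prepend the single harmless $\mathit{Id}$-stutter contributed by the call $m_X()$. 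Care is needed to make ``stutter-saturated'' precise (e.g.\ restrict attention to trace sets closed under inserting/removing adjacent duplicates, or work with the quotient $\widetilde{A}$ directly and check well-definedness of all operators on the quotient); this bookkeeping, rather than any deep idea, is where the real work lies.
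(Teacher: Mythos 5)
Your proposal is correct in outline and shares the paper's skeleton: a structural induction over the restricted grammar, generalised to open formulas by pairing a procedure interpretation $\rho$ with the induced valuation $\mathcal{V}_\rho(X_m)=\rho(m)$ (this is precisely the role of Lemma~\ref{lem:can-characterisation-rec}), with Proposition~\ref{prop:well} guaranteeing well-definedness, and your treatment of the $\mathit{Id}$, $\mathit{Sb}^a_x$, $\chop$, $\vee$, and $p\wedge\phi$ cases matches the paper's. Where you genuinely diverge is the $\mu X.\phi$ case: the paper dispatches it by translating into a modal equation system and appealing to the standard MES semantics plus Beki\v{c}'s Principle, mirroring the last case of the proof of Theorem~\ref{thm:stf-characterisation-rec}, whereas you run a direct Kleene-iteration comparison --- stutter-congruence of the two transformers, continuity, $H^k(\varnothing)\,\widetilde{=}\,F^k(\varnothing)$ by induction on $k$, then pass to the directed union. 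Both routes work; yours is more elementary and, notably, makes explicit the one point the paper's sketch is thinnest on, namely that the program-side functional inserts a $\sharp$ at every unfolding so the two least fixed points agree only modulo stuttering, which forces you to verify that $\chop$, $\cup$, $\restr{b}{}$ and $\sharp$ respect $\widetilde{=}$ (they do, since stutter removal preserves first and last states) and that the logical transformers are continuous, not merely monotone as stated in Section~\ref{sec:logic-traces}. Two small caveats: with several nested or mutually referring $\mu$-binders you still need Beki\v{c}'s Principle (or a simultaneous version of your iteration argument) to align the simultaneous program fixed point with the nested logical ones, so you have not entirely avoided that ingredient; and the ``single $\mathit{Id}$-stutter contributed by the call'' you mention in the case $\phi=X$ is not actually present there, since $\trcdenoth{m_X()}_\rho=\rho(m_X)$ exactly and the stutter is introduced only by $H$ at unfolding --- harmless modulo $\widetilde{=}$, but worth stating precisely.
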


Finally, we can establish that $\stfs{\cdot}$ and $\can{\cdot}$ form
a Galois connection w.r.t.\ the partial orders $\widetilde{\models}$ 
on formulas and $\widetilde{\preceq}$ on statements.

\begin{corollary}
\label{cor:can}
  Let $\phi$ be a closed trace formula, and let
  $\can{\phi}=\langle S_\phi,T_\phi\rangle$. Then,
  for every statement~$S$, we have:
  $ \stfs{S} \,\widetilde{\models}\, \phi
     \quad\text{iff}\quad
     S \,\widetilde{\preceq}\, S_\phi $.
\end{corollary}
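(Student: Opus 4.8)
The plan is to prove the displayed biconditional by a short chain of equivalences: unfold the two modulo-stuttering orders into set inclusions of (stutter-collapsed) trace sets, and then substitute the two characterisation theorems. First I would make the two relations explicit. For closed formulas $\psi,\phi$ set $\psi \widetilde{\models} \phi$ iff $\widetilde{\fdenot{\psi}} \subseteq \widetilde{\fdenot{\phi}}$, and for statements $S_1,S_2$ set $S_1 \widetilde{\preceq} S_2$ iff $\widetilde{\trcdenot{S_1}} \subseteq \widetilde{\trcdenot{S_2}}$ (equivalently, $\fdenot{\psi}$ and $\trcdenot{S_1}$ included in $\fdenot{\phi}$, $\trcdenot{S_2}$ respectively, under the evident lift of $\subseteq$ through $\widetilde{(\cdot)}$). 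I would also record that $S$ is assumed to contain no statement variables, so $\stfs{S}=\stfh{\varnothing}{S}$ is a closed formula and Theorem~\ref{thm:stf-characterisation-rec} applies to it verbatim.

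The core argument then reads: $\stfs{S} \widetilde{\models} \phi$ iff $\widetilde{\fdenot{\stfs{S}}} \subseteq \widetilde{\fdenot{\phi}}$ by definition; by Theorem~\ref{thm:stf-characterisation-rec} we have $\fdenot{\stfs{S}} = \trcdenot{S}$, so this is iff $\widetilde{\trcdenot{S}} \subseteq \widetilde{\fdenot{\phi}}$; by Theorem~\ref{thm:can-characterisation-rec} we have $\trcdenot{S_\phi} \widetilde{=} \fdenot{\phi}$, that is $\widetilde{\trcdenot{S_\phi}} = \widetilde{\fdenot{\phi}}$, so this is iff $\widetilde{\trcdenot{S}} \subseteq \widetilde{\trcdenot{S_\phi}}$, which is exactly $S \widetilde{\preceq} S_\phi$. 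Since $\can{\phi} = \langle S_\phi,T_\phi\rangle$, this is precisely the adjunction condition $\stfs{S} \widetilde{\models} \phi \iff S \widetilde{\preceq} \can{\phi}$ for a Galois connection between the preordered statements and formulas with lower adjoint $\stfs{\cdot}$ and upper adjoint $\can{\cdot}$ (passing to quotients by $\widetilde{=}$ if one insists on partial orders).

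There is no genuinely hard step here; the proof is essentially bookkeeping. The two places that need a little care are: (i) being consistent about the meaning of the lifted inclusion and checking it is transitive and respects $\widetilde{=}$ — immediate, since $\widetilde{A}=\widetilde{A'}$ makes $\widetilde{C}\subseteq\widetilde{A}$ and $\widetilde{C}\subseteq\widetilde{A'}$ equivalent; and (ii) invoking each characterisation theorem at the right level, namely Theorem~\ref{thm:stf-characterisation-rec} at the level of exact equality (the strongest trace formula matches the trace semantics on the nose, with no stuttering involved) and Theorem~\ref{thm:can-characterisation-rec} only modulo stuttering (the canonical program inserts stutter steps for guards and calls). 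If one wishes to exhibit this as a bona fide Galois connection rather than just the adjunction biconditional, I would add the remark that $\stfs{\cdot}$ and $\can{\cdot}$ are monotone with respect to $\widetilde{\preceq}$ and $\widetilde{\models}$, which again follows at once from the same two theorems, but this is not needed for the statement as displayed.
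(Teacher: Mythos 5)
Your proposal is correct and matches the paper's own argument, which is exactly the one-line appeal to Theorem~\ref{thm:stf-characterisation-rec} and Theorem~\ref{thm:can-characterisation-rec}; you have simply written out the chain of equivalences that the paper leaves implicit. The two points of care you flag (the first theorem holding on the nose, the second only modulo stuttering) are precisely the right ones.
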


\begin{proof}
  By using Theorem~\ref{thm:stf-characterisation-rec}
  and Theorem~\ref{thm:can-characterisation-rec}. 
\end{proof}


\section{Related Work}
\label{sec:related-work}

We do not discuss higher-order logical frameworks
\cite{Coq04,Isabelle02}. Even though these are expressive, in program
verification they are not used to specify trace-based properties of
programs, but rather to mechanize conventional contract-based
deductive verification \cite{Oheimb01}.

Stirling~\cite[p.~528, footnote~2]{sti-93-handbook-chapter} suggests
that the $\mu$-calculus can be generalized to non-unary predicates,
but does not develop this possibility further. 
In~\cite{olm-99}, Müller-Olm proposes a modal fixed-point logic with chop,
which can characterize any context-free process up to bisimulation or 
simulation. The logic is shown to be strictly more expressive than the 
modal $\mu$-calculus.
Lange \& Somla \cite{LangeSomla06} relate propositional dynamic logic
over context-free programs with Müller-Olm's logic and show the former
to be equi-expressive with a fragment of the latter.
Fredlund et al.~\cite{ACDFGN03} presented a verification tool for the
\textsc{Erlang} language based on first-order $\mu$-calculus with
actions~\cite{DamGurov02}.

In contrast to these papers, we separate programs from fixed-point
formulas and relate them in the form of judgments. Our logic has only
a single binary operator $\mathit{Sb}_x^{a}$ over arithmetic
expressions~$a$ and program variables~$x$, together with the chop
operator $\chop$.  The latter models composition of binary relations
in the denotational semantics. Our logic is sufficient to characterize
any $\Rec$ program. Specifically, $\mu$-formulas can serve as
contracts of recursive procedures.  More importantly, our approach
leads to a \emph{compositional} calculus, where all rules but the
consequence rule are analytic.

Kleene algebra with tests (KAT) \cite{Kozen97} are an equational
algebraic theory that has been shown to be as expressive as
propositional while programs. They have been mechanized in an
interactive theorem prover \cite{Pous13} and are able to express at
least Hoare-style judgments~\cite{Kozen00}. The research around
focuses around \emph{propositional while} programs: we are not aware
of results that relate KAT with recursive stateful
programs. Specifically, our result that procedure contracts can
expressed purely in terms of trace formulas
(Theorem~\ref{thm:stf-characterisation-rec}) has not been obtained by
algebraic approaches.

Expressive trace-based specification languages are relatively rare in
program verification.
The trace logic of Barthe et al.~\cite{BEGGKM19} actually is a
many-sorted \emph{first-order logic}, equipped with an arithmetic
theory of explicit trace positions to define program semantics. It is
intended to model program verification in first-order logic for
processing in automated theorem provers. Like $\trcdenot{}$, their
program semantics is compositional; however, it uses explicit time
points instead of algebraic operators.
An extension of Hoare logic with trace specifications is presented by
Ernst et al.~\cite{ErnstKM22}.  The standard Hoare-style pre- and
postcondition for a statement is extended with regular expressions
recording events emitted before the execution of the command and the
events emitted by its execution. Our trace logic is more expressive.
Also the first-order \emph{temporal logic of nested words} (NWTL) of
Alur et al.~\cite{AABEIL08} permits to specify certain execution
patterns within symbolic traces. It is orthogonal to our approach,
being based on nested event pairs and temporal operators, instead of
least fixed points and chop. NWTL is equally expressive over nested
words as first-order logic. The intended computational model is not
\Rec, but the more abstract non-deterministic Büchi automata over
nested words for which it is complete.
The \emph{temporal stream logic} of Finkbeiner et al.~\cite{FKPS19},
like our trace logic, has state updates $\mathit{Sb}_x^t$ (with a
different syntax) and state predicates, but it is based on linear
temporal logic and has no least fixed point or chop operator. Again,
the intended computational model is an extension of Büchi automata,
the verification target are FPGA programs.

Cousot \& Cousot \cite{CousotCousot00} define a trace-based semantics
for modal logics where (infinite) traces are equipped with past,
present, and future. Their main focus is to relate model checking and
static analysis to abstract interpretation---the trace-based semantics
is the basis for it. In contrast, our paper relates a computation
model to a logic. Like our semantics, theirs is compositional and the
operators mentioned in their paper could inspire abstractions of our
trace logic, cf.~Section~\ref{sec:abstr-spec} below.

Nakata \& Uustalu~\cite{nak-uus-09} present a trace-based co-inductive
operational semantics with chop for an imperative programming language
with loops. Following up on this work, \cite{DHJPT17} extended the
approach to an asynchronous concurrent language, but neither of these
uses fixed-points, so that the specification language is
incomplete. Also the calculus is not compositional.

Closest to our work is trace-based deductive
verification~\cite{bub-et-al-23}, using a similar trace logic as in
the present paper, but neither their semantics nor the calculus are
compositional. Also they prove soundness, but not completeness which
is left as an open question.


\section{Future Work}
\label{sec:future-work}

\subsection{Abstract Specifications and Extension of Recursive
  Programs}
\label{sec:abstr-spec}

It is desirable to formulate specifications in a more abstract manner
than the programs whose behavior they are intended to capture.  For
example, if we reinstate the atomic trace formula $p$ in our logic, we
can easily express the set of all finite traces as
$\fdenot{\mathit{true}}=\State^+$. Then we can define a binary
connective as
$\phi\cdot\cdot\,\psi\defeq\phi\chop\mathit{true}\chop\psi$ as ``any
finite (possibly, empty) computation may occur between $\phi$ and
$\psi$''. For example, the trace formula $\mu X.(X \cdot\cdot\, X)$
expresses that a procedure $m_X$ calls itself at least twice
recursively, at the beginning and at the end of its body,
respectively.

A more general approach to introduce non-determinism to the logic is
to define for a Boolean expression $b$ a \emph{binary relation} $R_b$
with semantics $R_b(s,s')\defequiv\bexdenot{b}(s)=\ttt$. The atomic
formula $p$ is then \emph{definable} as:
$p\defeq R_p\chop R_\mathit{true}^+$.
The advantage of basing $p$ on a binary relation is that it is more
easily represented as a canonical program. To this end, we introduce
an atomic statement $\Any$ with the trace semantics
$\trcdenot{\Any}=\{s\cdot s'\mid s,s'\in\State\}$, i.e.~in any given
state $s$, executing $\Any$ results in an arbitrary successor state.
The proof rule for $\Any$ is the axiom
$\sequent{\Gamma}{\judge{\Any}{R_\mathit{true}}}$ and obviously
$\stfs{\Any}\defeq R_{\mathit{true}}$.  Then $p$ is characterized by
$\can{p}\defeq\langle\If\ p\ \Then\ \mathit{havoc()}\ \Else\
\Abort,T\rangle$, where $T$ contains the declaration
$
\mathit{havoc}\ \{\If\ \ast\ \Then\ \Any\ \Else\ \Skip;\mathit{havoc()}\}.
$
  
Interestingly, $R_\mathit{true}$ (and, therefore, $\Any$ on the side
of programs) permits to define \emph{concatenation} of trace formulas
$\phi\cdot\psi$ with the obvious semantics:
\[
  \fdenot{\phi\cdot\psi}\,\defeq\,\{s_0\cdots s_n\cdot s'_0\cdots s'_m\mid s_0\cdots s_n\in\fdenot{\phi},\,s'_0,\cdots s'_m\in\fdenot{\psi}\}\,=\,\fdenot{\phi\chop R_{\mathit{true}}\chop\psi}
\]

\subsection{Proving Consequence of Trace Formulas}

In general this is a difficult problem that requires fixed-point
induction, but the derivations needed in practice might be relatively
simple, as the following example shows.

\begin{example}\label{ex:con}
  Consider the two trace formulas: 
  \begin{align*}
    \stfs{\mathit{down()}} & =\,\mathit{Id}\chop\mu X_{\mathit{down}}.
    \left((x>0\land\mathit{Id}\chop\mathit{Sb}_x^{x-2}\chop\mathit{Id}\chop X_{\mathit{down}})\lor(x\leq0\land\mathit{Id\chop}\mathit{Id})\right)\\
    \mathit{Dec}_x^+ & =\,\mu X_{\mathit{dec}}.(\mathit{Dec}_x\chop X_{\mathit{dec}}\lor\mathit{Dec}_x)
  \end{align*}
  from Examples~\ref{ex:rec-program-simple}
  and~\ref{ex:stf-rec-simple}, respectively. We expect the trace
  formula implication
  $\stfs{\mathit{down()}}\Rightarrow\mathit{Id}\chop\mathit{Dec}_x^+$
  to be provable, because of
  Theorem~\ref{thm:stf-characterisation-rec}.
\end{example}

It turns out that the following fixed-point induction rule and
consequence rule, combined with straightforward \emph{first-order}
consequence and logic rules, are sufficient to prove the claim:
\[
  \Rule{FP-Ind}
  {\sequent{\Gamma}{\phi\Rightarrow\psi}}
  {\sequent{\Gamma}{\mu X. \phi\Rightarrow\mu X. \psi}}
  \qquad
  \Rule{Cons-Left}
  {\sequent{\Gamma,\,\phi}{\psi}\qquad\sequent{\Gamma,\,\phi'}{\psi'}}
  {\sequent{\Gamma,\,\phi\chop\phi'}{\psi\chop\psi'}}
\]

In fact, the proof is considerably shorter than proving the judgment
$\judge{\mathit{down}()}{\mathit{Id}\chop\mathit{Dec}_x^+}$ in the
calculus of Section~\ref{sec:proof-calculus}, which is as well possible.

A proof system for trace formula implication that can prove the above
as well as many other non-trivial examples is given
in~\cite{Heidler24}.

\subsection{Non-terminating Programs}
\label{sec:non-term-progr}

Our results so far are limited to \emph{terminating} programs, i.e.~to
sets of finite traces. To extend the calculus with a termination
measure, such that a proof of $\judge{S}{\phi}$ not only shows
correctness of $S$ relative to $\phi$, but also ensures it produces
only finite traces, is easy.

However, the trace-based setup permits, in principle, also to prove
properties of \emph{non-terminating} programs. To this end, it is
necessary to extend the logic with operators whose semantics contains
infinite traces. One obvious candidate are \emph{greatest
  fixed-points}~\cite{sti-93-handbook-chapter}. The downside to this
approach is that nested fixed points of opposite polarity are
difficult to understand, as is well known from $\mu$-calculus. Is
there a restriction of mixed fixed-point formulas that
\emph{naturally} corresponds to a certain class of programs? The
theory of strongest trace formulas and canonical programs might guide
the search for such fragments.


\section{Conclusion}
\label{sec:extro}

We presented a fixed-point logic that characterizes recursive programs
with non-deterministic guards. Both, programs and formulas have the
same kind of trace semantics, which, like the calculus for proving
judgments, is fully compositional in the sense that the definitions
and rules embody no context.
The faithful embedding of programs into a logic seems to suggest that
we merely replace one execution model (programs) with another (trace
formulas). So why is it worth having such an expressive specification
logic? We can see four reasons:


First, the logic renders itself naturally to extension and abstraction
that cannot be easily mimicked by programs or that are much less
natural for programs.
This is corroborated by the discussion in
Section~\ref{sec:abstr-spec}, but also by the case of conjunction: It
is trivial to add conjunction $\phi\land\psi$ to the trace logic and
to a calculus for trace formulas, but conjunction has no natural
program counter-part. Yet it permits to specify certain
hyper-properties, i.e., properties relating sets of traces.

Second, the logic offers reasoning patterns that are easily justified
algebraically, such as projection, replacement of equivalents,
strengthening, distribution, etc., that are not obvious in the realm
of programs.

Third, the concept of \emph{strongest trace formula} leads to a
characterization of valid judgments and, thereby, enables a simple
completeness proof.

And finally, the duality between programs and formulas permits to
prove judgments by freely mixing two styles of reasoning: with the
rules of the calculus in Figure~\ref{fig:calculus}, or using a calculus
for the consequence of trace formulas.
For example, a judgment such as
$\judge{\mathit{down()}}{\mathit{Id}\chop\mathit{Dec}_x^+}$ can be
proved as in Example~\ref{ex:con} or directly with the rules in
Figure~\ref{fig:calculus}, but also by \emph{mixing} both styles.

A perhaps surprising feature of our trace logic is the fact that no
explicit notion of \emph{procedure contract} is required to achieve
procedure-modular verification: Instead, strongest trace formulas and
statement variables are employed.
This results in a novel (\textsc{Call}) rule that works with
\emph{symbolic continuations} realized by statement variables.




\bibliography{tracesem,reiner}


\appendix

\section{Proofs and More Examples}

\subsection{Semantics}
\label{sec:semantics-app}

\begin{example}\label{ex:sos-simple}
  Let us execute the statement $\Skip ; x := x-1$ (with empty
  procedure table) from some arbitrary initial state~$s$ in the above
  SOS.
  That is, let us apply the rules to derive all configurations
  reachable from the initial one, which is
  $\config{\Skip ; x := x-1}{s}$.
  First, by means of the rules \textsc{Skip} and \textsc{Seq-1}, we
  derive the transition
  $\SOSConfTrans{\Skip ; x := x-1}{s}{x := x-1}{s}$.
  Then, by applying the \textsc{Assign} rule, we derive the transition
  $\SOSConfFinTrans{x := x-1}{s}{s [x \mapsto s(x)-1]}$.
  We reached a final configuration, and thus execution terminates.
  The program execution (or run) we obtain is:
  $
  \config{\Skip ; x := x-1}{s}  \Rightarrow
  \config{x := x-1}{s}          \Rightarrow
  s [x \mapsto s(x)-1]
  $
\end{example}

\begin{example}\label{ex:sos-rec}
  We execute the statement $x:=2;\mathit{down}()$ with the procedure
  table from Example~\ref{ex:rec-program-simple}.  The program
  execution (or run) we obtain from an arbitrary state $s$ is:
  $$
  \begin{array}{l}
  \config{x:=2;\mathit{down}()}{s}  \Rightarrow
  \config{\mathit{down}()}{s[x \mapsto 2]}  \Rightarrow \\
  \config{\If\ x > 0\ \Then\ x := x-2; \mathit{down}()\ \Else\ \Skip}{s[x \mapsto 2]}  \Rightarrow \\
  \config{x := x-2; \mathit{down}()}{s[x \mapsto 2]}  \Rightarrow 
  \config{\mathit{down}()}{s[x \mapsto 0]}  \Rightarrow \\
  \config{\If\ x > 0\ \Then\ x := x-2; \mathit{down}()\ \Else\ \Skip}{s[x \mapsto 0]}  \Rightarrow 
  \config{\Skip}{s[x \mapsto 0]}  \Rightarrow s[x \mapsto 0]
  \end{array}
  $$
\end{example}

\begin{example}
  Building on Example~\ref{ex:sos-simple}, the induced finite-trace
  semantics of the statement $\Skip ; x := x-1$ is the set
  $
    \setdef{s \cdot s \cdot s[x \mapsto s(x) - 1]}{s \in \State}
  $
  of finite traces, all of which are of length~$3$.
\end{example}

We state some useful algebraic \emph{laws} for finite-trace sets which
we use later in our proofs:
$$
\begin{array}{r@{\;}c@{\;}l@{\qquad}r@{\;}c@{\;}l@{\qquad}r@{\;}c@{\;}l}
  A \chop (B \cup C) & = & A \chop B \>\cup\, A \chop C &
   (A \cup B) \chop C & = & A \chop C \>\cup\, B \chop C \\
   A \subseteq B & \Rightarrow & A \chop C \subseteq B \chop C & 
   (A \chop B) \restr{b}{} & = & A \restr{b}{} \!\!\chop B  &
   \sharp (A \chop B) & = & \sharp A \chop B \\     
\end{array}
$$

The following ``unfolding'' equivalence for procedure calls holds:

\begin{proposition}
\label{prop:call-unfolding-2}
Let $m\,\{S\}$ be a method declaration. Then the following semantic
equality holds:
$
  \trcdenot{m()} \,=\, \sharp \trcdenot{S} 
$
\end{proposition}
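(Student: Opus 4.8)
The plan is to prove $\trcdenot{m()} = \sharp\trcdenot{S}$ by unfolding the definitions and exploiting the fixed-point characterization of the denotational semantics. Recall that $\trcdenot{m()} = \trcdenoth{m()}_{\rho_0} = \rho_0(m)$, where $\rho_0$ is the least fixed-point of the functional $H$ from Definition~\ref{def:while-tr}. Since $\rho_0$ is a fixed-point, its $m$-component satisfies $\rho_0(m) = \sharp\trcdenoth{S}_{\rho_0}$, by the very definition of $H$ (the $m$-component of $H(\rho)$ is $\sharp\trcdenoth{S}_\rho$, where $m\,\{S\}\in T$). So the statement reduces to showing $\sharp\trcdenoth{S}_{\rho_0} = \sharp\trcdenot{S}$, which follows immediately if $\trcdenoth{S}_{\rho_0} = \trcdenot{S}$ — and that is precisely the definition $\trcdenot{S}\defeq\trcdenoth{S}_{\rho_0}$. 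Hence the proof is essentially a two-line chain of equalities.

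Spelling it out: first I would write $\trcdenot{m()} = \trcdenoth{m()}_{\rho_0}$ by definition of the denotational finite-trace semantics (the $\mathcal{I}$-free case, or equivalently with the empty-to-trivial interpretation), then $\trcdenoth{m()}_{\rho_0} = \rho_0(m)$ by the semantic equation for procedure calls in Figure~\ref{fig:while-tr}. Next, since $\rho_0 = H(\rho_0)$ (Knaster--Tarski, as invoked in Definition~\ref{def:while-tr}), projecting onto the $m$-component gives $\rho_0(m) = \sharp\trcdenoth{S}_{\rho_0}$. Finally $\sharp\trcdenoth{S}_{\rho_0} = \sharp\trcdenot{S}$ since $\trcdenot{S}$ is by definition $\trcdenoth{S}_{\rho_0}$. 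Chaining these yields the claim.

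The only genuinely delicate point is a bookkeeping one: the functional $H$ and the fixed-point $\rho_0$ are defined per \Rec\ program $\langle S_{\mathrm{main}}, T\rangle$, and the proposition speaks of "a method declaration $m\,\{S\}$" in isolation. I would resolve this by fixing the ambient program whose table $T$ contains $m\,\{S\}$ (this is already the standing convention in the paper, cf.\ the remark that "$S$ is evaluated relative to a \Rec\ program with a procedure table $T$ which is not explicitly specified") and noting that the indexing of procedures is immaterial — the $m$-th component of the tuple $H(\rho)$ is $\sharp\trcdenoth{S_m}_\rho$ with $m_m = m$, $S_m = S$. Beyond that there is no real obstacle; the content of the proposition is just the fixed-point equation for $H$ read at a single coordinate, together with unfolding the definition of $\trcdenot{\cdot}$. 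This proposition is then used downstream (e.g.\ in the proof of Theorem~\ref{thm:semantics-correctness-rec}) to relate the SOS $\textsc{Call}$ step, which inserts a stuttering copy of the state, to the $\sharp$ operator on trace sets.
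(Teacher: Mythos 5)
Your proposal is correct and follows exactly the same argument as the paper's proof: the identical four-step chain $\trcdenot{m()} = \trcdenoth{m()}_{\rho_0} = \rho_0(m) = \sharp\trcdenoth{S}_{\rho_0} = \sharp\trcdenot{S}$, using the definition of $\trcdenot{\cdot}$, the semantic equation for calls, and the fixed-point property of $\rho_0$. The bookkeeping remark about the ambient procedure table is a reasonable addition but not needed beyond the paper's standing convention.
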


\begin{proof}
We have:
$$ \begin{array}{r@{\;}c@{\;}ll}
   \trcdenot{m()} & \>=\> & \trcdenoth{m()}_{\rho_0} &~~~~~
      \{\mbox{By the definition of $\trcdenot{S}$}\} \\
   & = & \rho_0 (m) &~~~~~
      \{\mbox{By the definition of $\trcdenoth{S}_{\rho}$}\} \\
   & = & \sharp \trcdenoth{S}_{\rho_0} &~~~~~
      \{\mbox{Since $\rho_0$ is a fixed-point of $H$}\} \\
   & = & \sharp \trcdenot{S} &~~~~~
      \{\mbox{By the definition of $\trcdenot{S}$}\} \\
   \end{array} $$
\end{proof}



\noindent\textbf{Fixed-Point Induction and Beki\v{c}'s Principle.}
It can be shown that
$\rho_0 = \bigsqcup_{k \geq 0} H^k (\varnothing^n)$, where $\sqcup$ is
point-wise set union, a result originally due to Kleene, but popularly
known as the Knaster-Tarski Fixed-Point Theorem.
The interpretations $\rho^k \defeq H^k (\varnothing^n)$ are referred
to as \emph{fixed-point approximants} of~$H$.

Then, for a given transformer~$H$ and set~$A$, a result of the type
$\mathit{LFP}\, H \sqsubseteq A$ can be established by proving
$H^k (\varnothing) \sqsubseteq A$ for all~$k$, by mathematical
induction on~$k$. Since $\bigsqcup_{k \geq 0} H^k (\varnothing^n)$ is
the least upper bound of the set of approximants, we obtain
$\mathit{LFP}\, H \sqsubseteq A$.
But sometimes one can prove a stronger result, namely that
$\gamma \sqsubseteq A$ implies $H (\gamma) \sqsubseteq A$ for
all~$\gamma$. This immediately entails the above proof by mathematical
induction---a proof schema commonly referred to as \emph{Fixed-Point
  Induction}.
We make use of these proof principles below.

We will also make use of another principle, known as \emph{Beki\v{c}'s
  Principle}, which allows a simultaneous fixed-point, such
as~$\rho_0$, to be expressed as a series of individual least
fixed-points~\cite{bek-69}.
Using this principle, we can apply the above Fixed-Point Induction
Principle, but on one procedure name at a time.


\smallskip

\noindent\textbf{Presentation of Proofs.}
While the following inductive proofs over the statement structure were
done for all cases, for sake of conciseness, we reproduce only the
cases for conditional and method call: All other cases are
straightforward and carry no additional information.


\begin{proof}[Proof of Theorem~\ref{thm:semantics-correctness-rec}]
  
Typically, such results are proved separately in the two directions,
utilizing different induction principles.

\noindent\textbf{(1) Proof of $\trcdenot{S} \subseteq \sosdenot{S}$.}

For $\sigma \in \State^+$,
$\sigma \in \trcdenot{S}$ is equivalent by definition to
$\sigma \in \trcdenoth{S}_{\rho_0}$, which in turn is equivalent to
$\exists k \geq 0.\, \sigma \in \trcdenoth{S}_{\rho^k}$.
Hence, what we want to show is logically equivalent to showing that 
$\trcdenoth{S}_{\rho^k} \subseteq \sosdenot{S}$ for all $k \geq 0$. 
We show this by mathematical induction on~$k$.

The base case of $k = 0$ holds vacuously.
For the induction case, assume 
$\trcdenoth{S}_{\rho^k} \subseteq \sosdenot{S}$ for an
arbitrary $k \geq 0$.
We proceed by (an inner) induction on the structure of~$S$.
\emph{We proved all cases, but for lack of space, for this and all
  results that follow, we reproduce only the most interesting cases.}

\noindent\textbf{Case $S = \If\ b\ \Then\ S_1\ \Else\ S_2$.}
Assume the result holds for~$S_1$ and~$S_2$.
Let 
$s_0 \cdot s_1 \cdot \ldots \cdot s_n \in 
 \trcdenot{\If\ b\ \Then\ S_1\ \Else\ S_2}$.
Then, by Definition~\ref{def:while-tr}, 
$s_0 = s_1$, and either $\bexdenot{b} (s_0) = \ttt$ and 
$s_1 \cdot \ldots \cdot s_n \in \trcdenot{S_1}$,
or else $\bexdenot{b} (s_0) = \fff$ and 
$s_1 \cdot \ldots \cdot s_n \in \trcdenot{S_2}$.
By the induction hypothesis, then 
either $\bexdenot{b} (s_0) = \ttt$ and 
$s_1 \cdot \ldots \cdot s_n \in \sosdenot{S_1}$,
or else $\bexdenot{b} (s_0) = \fff$ and 
$s_1 \cdot \ldots \cdot s_n \in \sosdenot{S_2}$.
By Definition~\ref{def:induced-finite-trace-semantics},
and by applying rule \textsc{If-1} or \textsc{If-2},
respectively, we obtain that
$s_0 \cdot s_1 \cdot \ldots \cdot s_n \in 
 \sosdenot{\If\ b\ \Then\ S_1\ \Else\ S_2}$.

\noindent\textbf{Case $S = m ()$} Let~$m$ be declared as $m\, \{S'\}$.
Let 
$s_0 \cdot s_1 \cdot \ldots \cdot s_n \in \trcdenoth{m ()}_{\rho^{k+1}}$.
Then, by the definition of $\trcdenot{S}$, we have
$s_0 \cdot s_1 \cdot \ldots \cdot s_n \in \rho^{k+1} (m)$.  But
$\rho^{k+1} = H (\rho^k)$, and therefore
$s_0 \cdot s_1 \cdot \ldots \cdot s_n \in \sharp
\trcdenoth{S'}_{\rho^k}$, by definition of $H$.
Hence, $s_0 = s_1$ and
$s_1 \cdot s_2 \cdot \ldots \cdot s_n \in \trcdenoth{S'}_{\rho^k}$.
By the outer induction hypothesis,
$s_1 \cdot s_2 \cdot \ldots \cdot s_n \in \sosdenot{S'}$, hence
$s_0 \cdot s_1 \cdot \ldots \cdot s_n \in \sharp \sosdenot{S'}$. By
Rule~\textsc{Call} and
Definition~\ref{def:induced-finite-trace-semantics}, we finally obtain
that $s_0 \cdot s_1 \cdot \ldots \cdot s_n \in \sosdenot{m ()}$.

\noindent\textbf{(2) Proof of $\sosdenot{S} \subseteq \trcdenot{S}$.}

The proof proceeds by (strong) induction on the length of traces.
Assume the result holds for all traces of length up to~$n$, for all
statements~$S$. Let $S$ be a statement and
$s_0 \cdot \ldots \cdot s_n \in \sosdenot{S}$ be a trace of
length~$n+1$. Then, by
Definition~\ref{def:induced-finite-trace-semantics}, there are
statements $S_0, S_1, \ldots, S_{n-1}$ such that $S_0 = S$,
$\SOSConfTrans{S_i}{s_i}{S_{i+1}}{s_{i+1}}$ for all
$0 \leq i \leq n-2$, and $\SOSConfFinTrans{S_{n-1}}{s_{n-1}}{s_n}$.
The proof that $s_0 \cdot \ldots \cdot s_n \in \trcdenot{S}$ proceeds
by case analysis on the SOS rule applied to obtain the first
transition of the SOS trace.
Again, we only show the most interesting cases.

\noindent\textbf{Rule} \textsc{If-1}. 
Then $S = \If\ b\ \Then\ S_1\ \Else\ S_2$, $s_0 = s_1$ and
$\bexdenot{b} (s_0) = \ttt$.
By Definition~\ref{def:induced-finite-trace-semantics},
$s_1 \cdot \ldots \cdot s_n \in \sosdenot{S_1}$,
and by the induction hypothesis, 
$s_1 \cdot \ldots \cdot s_n \in \trcdenot{S_1}$.
Since $s_0 = s_1$ and $\bexdenot{b} (s_0) = \ttt$, we have
$s_0 \cdot s_1 \cdot \ldots \cdot s_n \in 
 (\sharp \trcdenot{S_1}) \!\!\mid_b$.
Then, by Definition~\ref{def:while-tr},
$s_0 \cdot s_1 \cdot \ldots \cdot s_n \in 
 \trcdenot{\If\ b\ \Then\ S_1\ \Else\ S_2}=\trcdenot{S}$.

\noindent\textbf{Rule} \textsc{Call}. 
Then $S = m ()$ for a procedure~$m$ declared as $m\, \{S'\}$ for some
statement~$S'$, and $s_0 = s_1$.
By Definition~\ref{def:induced-finite-trace-semantics}, 
$s_1 \cdot \ldots \cdot s_n \in \sosdenot{S'}$, and then, by the
induction hypothesis, $s_1 \cdot \ldots \cdot s_n \in \trcdenot{S'}$.
Now, since $s_0 = s_1$, 
$s_0 \cdot s_1 \cdot \ldots \cdot s_n \in \sharp \trcdenot{S'}$ 
and hence, by Proposition~\ref{prop:call-unfolding-2},
$s_0 \cdot s_1 \cdot \ldots \cdot s_n \in \trcdenot{m ()}=\trcdenot{S}$. 

\end{proof}

\noindent\textbf{Fixed-Point Induction.}
As in the domain of finite traces,
one can also apply
fixed-point induction in proofs about recursive formulas.
In particular, a result of shape $\logdenot{\mu X. \phi} \subseteq A$
can be established by proving that $\gamma \subseteq A$ entails
$\logdenotsubst{\phi}{X}{\gamma} \subseteq A$.

\begin{proposition}[Restricted Traces and State Formulas]
  \label{prop:state-restr}
  Let $b$ be a Boolean expression, and $\phi$ a trace formula. Then
  $
    (\sharp\fdenot{\phi})\!\!\mid_b\,=\,\fdenot{b\land\mathit{Id}\chop\phi}
  $.
\end{proposition}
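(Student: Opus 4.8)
The plan is to prove this by straightforward unfolding of the relevant definitions on both sides and comparing the resulting sets of traces. First I would fix an arbitrary trace $s\cdot\sigma\in\State^+$ and show it lies in the left-hand set if and only if it lies in the right-hand set. For the left-hand side, $(\sharp\fdenot{\phi})\restr{b}{}$, I would unfold the restriction operator: $s\cdot\sigma$ belongs to it exactly when $s\cdot\sigma\in\sharp\fdenot{\phi}$ and $\bexdenot{b}(s)=\ttt$. Unfolding $\sharp$ in turn, $s\cdot\sigma\in\sharp\fdenot{\phi}$ means that $\sigma$ has the shape $s\cdot\sigma'$ for some $\sigma'$ with $s\cdot\sigma'\in\fdenot{\phi}$; that is, the trace is $s\cdot s\cdot\sigma'$ with $s\cdot\sigma'\in\fdenot{\phi}$ and $\bexdenot{b}(s)=\ttt$.

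For the right-hand side, I would unfold the semantic clauses from Figure~\ref{fig:logic-semantics} (and the definition of $\mathit{Id}$ from Section~\ref{sec:binary-relations}) in order. We have $\fdenot{b\land\mathit{Id}\chop\phi}=\fdenot{b}\cap\fdenot{\mathit{Id}\chop\phi}$, where I am implicitly using that conjunction in the restricted grammar is $p\wedge\phi$ with $p$ a Boolean expression, so $\fdenot{b}=\setdef{s\cdot\sigma}{s\models b}=\setdef{s\cdot\sigma}{\bexdenot{b}(s)=\ttt}$; here I should note the convention (already in the paper's examples, e.g.~the $\mathit{If}$ clause of Figure~\ref{fig:stf}) that $\land$ binds tighter than $\chop$, so the formula parses as $(b\land\mathit{Id})\chop\phi$. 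Then $\fdenot{(b\land\mathit{Id})\chop\phi}=\fdenot{b\land\mathit{Id}}\chop\fdenot{\phi}$, and $\fdenot{b\land\mathit{Id}}=\setdef{s\cdot s'}{s'=s,\ \bexdenot{b}(s)=\ttt}=\setdef{s\cdot s}{\bexdenot{b}(s)=\ttt}$. Applying the definition of $\chop$, a trace lies in $\fdenot{b\land\mathit{Id}}\chop\fdenot{\phi}$ iff it has the form $\sigma_A\cdot t\cdot\sigma_B$ with $\sigma_A\cdot t\in\fdenot{b\land\mathit{Id}}$ and $t\cdot\sigma_B\in\fdenot{\phi}$; since $\fdenot{b\land\mathit{Id}}$ consists only of length-two traces $s\cdot s$, this forces $\sigma_A=s$, $t=s$, $\bexdenot{b}(s)=\ttt$, and the trace is $s\cdot s\cdot\sigma_B$ with $s\cdot\sigma_B\in\fdenot{\phi}$.

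Comparing the two characterizations, both sides consist exactly of the traces $s\cdot s\cdot\sigma'$ with $\bexdenot{b}(s)=\ttt$ and $s\cdot\sigma'\in\fdenot{\phi}$, which establishes the equality. This proposition is essentially a bookkeeping lemma matching the duplication-and-restriction pattern in the denotational semantics of $\mathit{If}$ (Figure~\ref{fig:while-tr}) with the corresponding $b\land\mathit{Id}\chop\phi$ pattern in the strongest trace formula (Figure~\ref{fig:stf}), and I do not anticipate any genuine obstacle. The only point requiring mild care is being explicit about the precedence convention between $\land$ and $\chop$ and about the fact that $\mathit{Id}$ contributes only length-two traces, so that the $\chop$ with $\fdenot{b\land\mathit{Id}}$ does exactly the work of $\sharp$ followed by $\restr{b}{}$ rather than introducing or collapsing extra states; once that is spelled out the two set comprehensions are literally the same.
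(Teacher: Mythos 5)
Your proof is correct and follows essentially the same route as the paper's: unfold $\restr{b}{}$ and $\sharp$ on the left, unfold the semantic clauses for $p$, $\mathit{Id}$, $\chop$ and $\wedge$ on the right, and observe that both sides equal $\setdef{s\cdot s\cdot\sigma'}{\bexdenot{b}(s)=\ttt \>\wedge\> s\cdot\sigma'\in\fdenot{\phi}}$. One small correction: the paper's own proof parses $b\land\mathit{Id}\chop\phi$ as $b\land(\mathit{Id}\chop\phi)$ (its final steps read $\fdenot{b}\cap(\fdenot{\mathit{Id}}\chop\fdenot{\phi})$), i.e.\ the opposite of the precedence convention you assert; fortunately the two readings denote the same trace set here, by the stated law $(A\chop B)\restr{b}{}=A\restr{b}{}\chop B$, so your computation still yields the correct characterization.
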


\begin{proof}
We have:
  \[
    \begin{array}{cl}
    & (\sharp\fdenot{\phi})\!\!\mid_b\\
    = & \setdef{s \cdot \sigma \in \sharp\fdenot{\phi}}{\bexdenot{b} (s) = \ttt} \\
    = & \setdef{s \cdot \sigma }{\bexdenot{b} (s) = \ttt} \cap\setdef{s \cdot s \cdot \sigma }{s\cdot\sigma\in\fdenot{\phi}}\\
    = & \setdef{s \cdot \sigma }{s\models b} \cap\setdef{s \cdot s}{s\in\State}\chop\setdef{s \cdot \sigma }{s\cdot\sigma\in\fdenot{\phi}}\\
    = & \fdenot{b} \cap(\fdenot{\mathit{Id}}\chop\fdenot{\phi})\\
    = & \fdenot{b\land\mathit{Id}\chop\phi}
    \end{array}
  \]
\end{proof}

\subsection{Strongest Trace Formula}
\label{sec:stf-app}

The proof of Theorem~\ref{thm:stf-characterisation-rec}---the
semantics of the strongest trace formula of a program turns is
identical to its trace semantics---requires an inner fixed-point
induction for which it is advantageous to break down $\mathsf{stf}$
into a \emph{modal equation system} $\mathsf{mes}$ \cite{Mader97} that
preserves the structure of procedure declarations.

\begin{definition}[Modal Equation System]
\label{def:meq}
  Given a closed trace formula $\phi$, the \emph{modal equation
  system} $\mes{\phi}$ is an open trace formula together with a set
  of modal equations of the form $X_i=\phi_i$, inductively defined
  over $\phi$ as follows: $\mes{\phi}$ is just the homomorphism over
  the abstract syntax, except when $\phi=\mu X.\phi'$: In this case,
  $\mes{\phi} \>\defeq\> X$, and a new equation $X=\mes{\phi'}$ is
  added.
\end{definition}

The semantics of modal equation systems is defined as in the
literature~\cite{Mader97}, from where we also take the semantic
equivalence between~$\phi$ and $\mes{\phi}$.
%

\begin{example}
  The modal equation system corresponding to the strongest trace
  formula in Example~\ref{ex:stf-rec} is: $\mes{\stfs{\mathit{even}()}} = \mathit{Id}\chop X_{\mathit{even}}$, where
  \begin{align*} 
    X_{\mathit{even}}  = \bigl((x = 0\, \land\, & \mathit{Id} \chop \mathit{Sb}^{1}_{y}) \lor (x \neq 0 \land \mathit{Id} \chop \mathit{Sb}^{x-1}_{x} \chop \mathit{Id} \chop X_{\mathit{odd}})\bigr)\\
    X_{\mathit{odd}}  = \bigl((x = 0\, \land\, & \mathit{Id} \chop \mathit{Sb}^{0}_{y}) \lor (x \neq 0 \land \mathit{Id} \chop \mathit{Sb}^{x-1}_{x} \chop \mathit{Id} \chop X_{\mathit{even}})\bigr)
  \end{align*}
  Observe the structural similarity between the formula
  $\mathit{Id}\chop X_{\mathit{even}}$ in the context of the defining
  equations for~$X_{\mathit{even}}$ and~$X_{\mathit{odd}}$, and the
  statement $\mathit{even}()$ in the context of the procedure
  table~$T$ of Example~\ref{ex:rec-program}.
\end{example}

To use the decomposition of a trace formula into a modal equation
system, we need to define for each program~$S$ an open trace formula
corresponding to $\mes{\stfs{S}}$.  This transformation, called $\stfsh{S}$,
is defined exactly as $\stfh{X}{S}$ in Definition~\ref{def:stf}
(ignoring the parameter~$X$), except for the case $S=m()$, which is
defined as $\stfsh{m()} = X_m$.

\begin{lemma}
\label{lem:stfsh-characterisation}
  Let $\langle S,T\rangle$ be a \Rec\ program and $M$ the procedures
  declared in $T$. Let $\rho : M \rightarrow 2^{\State^+}$ be an
  arbitrary interpretation of the procedures $m\in M$, and let
  $\mathcal{V}_\rho : \mathsf{RVar} \rightarrow 2^{\State^+}$ be the
  (induced) valuation of the recursion variables defined by
  $\mathcal{V}_\rho (X_m) \defeq \rho (m)$.
  We have, for all statements~$S$ of \Rec:
  $
  \logdenotV{\stfsh{S}}{\mathcal{V}_\rho}  \>=\>  \trcdenoth{S}_\rho
  $
\end{lemma}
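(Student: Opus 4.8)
The plan is to proceed by induction on the structure of the statement~$S$. For the atomic cases and the compositional cases ($\Skip$, $x:=a$, $S_1;S_2$, and $\If\ b\ \Then\ S_1\ \Else\ S_2$), the definition of $\stfsh{\cdot}$ mirrors the defining equations of $\trcdenoth{\cdot}_\rho$ in Figure~\ref{fig:while-tr} clause by clause, so the result follows immediately from the induction hypothesis together with the semantic clauses for the logical connectives in Figure~\ref{fig:logic-semantics}. The one place where the logical side and the program side do not syntactically coincide is the conditional: here $\stfsh{\If\ b\ \Then\ S_1\ \Else\ S_2} = (b\land\mathit{Id}\chop\stfsh{S_1})\lor(\lnot b\land\mathit{Id}\chop\stfsh{S_2})$, whereas $\trcdenoth{\If\ b\ \Then\ S_1\ \Else\ S_2}_\rho = (\sharp\trcdenoth{S_1}_\rho)\restr{b}{}\cup(\sharp\trcdenoth{S_2}_\rho)\restr{\lnot b}{}$; the two are bridged precisely by Proposition~\ref{prop:state-restr}, so this case is handled by applying that proposition to each disjunct and then invoking the induction hypothesis on $S_1$ and $S_2$.

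The interesting case, and the main obstacle, is the procedure call $S=m()$, where $\stfsh{m()}=X_m$ and $\trcdenoth{m()}_\rho=\rho(m)$. On the logical side we need $\logdenotV{X_m}{\mathcal{V}_\rho}=\mathcal{V}_\rho(X_m)=\rho(m)$, which by the semantic clause $\logdenot{X}=\mathcal{V}(X)$ is immediate by the very definition of the induced valuation $\mathcal{V}_\rho$. So taken literally, each individual equation $\logdenotV{\stfsh{S}}{\mathcal{V}_\rho}=\trcdenoth{S}_\rho$ is a direct structural induction and no fixed-point reasoning is needed \emph{at this level} — the fixed-point content has been deliberately factored out into the modal equation system $\mes{\cdot}$ and the passage from $\stfsh{S}$ back to $\stf{\varnothing}{S}$, which is where Definition~\ref{def:meq} and the semantics of modal equation systems from~\cite{Mader97} come in, to be used in the subsequent proof of Theorem~\ref{thm:stf-characterisation-rec} rather than here.

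Concretely, I would carry out the steps in this order: first state the induction on the structure of $S$ and dispatch the base cases $\Skip$ and $x:=a$ by comparing $\stfsh{\Skip}=\mathit{Id}$ with $\trcdenoth{\Skip}_\rho=\setdef{s\cdot s}{s\in\State}=\fdenot{\mathit{Id}}$, and similarly for assignment using the relation $\mathit{Sb}^a_x$; next handle sequential composition, where both sides are a chop of the two sub-denotations and the induction hypothesis closes the case; then treat the conditional as described, invoking Proposition~\ref{prop:state-restr} on each branch; and finally handle $m()$ by unwinding both definitions to $\mathcal{V}_\rho(X_m)$ and $\rho(m)$ respectively and appealing to $\mathcal{V}_\rho(X_m)\defeq\rho(m)$. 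The subtlety to flag explicitly is that $\rho$ is an \emph{arbitrary} interpretation, not the least fixed point $\rho_0$ — this is exactly what makes the induction go through without any fixed-point machinery in this lemma, and it is what allows the lemma to be reused later with $\rho$ instantiated at the fixed-point approximants of $H$ in the completeness-style fixed-point induction.
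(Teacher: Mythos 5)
Your proposal matches the paper's proof: a structural induction on $S$ in which the atomic and compositional cases unwind definitions, the conditional case is bridged by (the obvious generalization to open formulas of) Proposition~\ref{prop:state-restr}, and the call case reduces to $\mathcal{V}_\rho(X_m)=\rho(m)$ by definition of the induced valuation. Your observation that the arbitrariness of $\rho$ is what defers all fixed-point reasoning to Theorem~\ref{thm:stf-characterisation-rec} is exactly the intended reading.
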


\begin{proof}
The proof proceeds by induction on the structure of~$S$. 
%

\noindent\textbf{Case $S = \If\ b\ \Then\ S_1\ \Else\ S_2$.}
By the induction hypothesis,
$\logdenotV{\stfs{S_i}}{\mathcal{V}_\rho} = \trcdenoth{S_i}_\rho$ for $i=1,2$.
Using the obvious generalization of Proposition~\ref{prop:state-restr}
to open formulas, we have:  
\[
\begin{array}{cl}
    & \logdenotV{\stfsh{\If\ b\ \Then\ S_1\ \Else\ S_2}}{\mathcal{V}_\rho} \\
  = & \logdenotV{(b \land \mathit{Id} \chop \stfsh{S_1})
              \lor
              (\lnot b \land \mathit{Id} \chop \stfsh{S_2})}{\mathcal{V}_\rho} \\
  = & \logdenotV{(b \land \mathit{Id} \chop \stfsh{S_1})}{\mathcal{V}_\rho}
      \cup
      \logdenotV{(\lnot b \land \mathit{Id} \chop \stfsh{S_2})}{\mathcal{V}_\rho} \\
  = & (\sharp \logdenotV{\stfsh{S_1}}{\mathcal{V}_\rho}) \!\!\mid_b 
      \cup\>
      (\sharp \logdenotV{\stfsh{S_2}}{\mathcal{V}_\rho}) \!\!\mid_{\neg b}  \\
  = & (\sharp \trcdenoth{S_1}_\rho) \!\!\mid_b 
      \cup\>
      (\sharp \trcdenoth{S_2}_\rho) \!\!\mid_{\neg b}  \\
  = & \trcdenoth{\If\ b\ \Then\ S_1\ \Else\ S_2}_\rho  
\end{array}
\]

\noindent\textbf{Case $S = m()$.}
We have:
\[
\begin{array}{cl}
    & \logdenotV{\stfsh{m()}}{\mathcal{V}_\rho} \\
  = & \logdenotV{X_m}{\mathcal{V}_\rho}  \\
  = & \mathcal{V}_\rho (X_m)  \\
  = & \rho (m)  \\
  = & \trcdenoth{m()}_\rho  
\end{array}
\]
\end{proof}

\begin{proof}[Proof of Theorem~\ref{thm:stf-characterisation-rec}]
  Since $\stfsh{S}$ and $\stfh{X}{S}$ are defined identically, except
  for the case $S=m()$, the proof is the same as for
  Lemma~\ref{lem:stfsh-characterisation}, except for that case:
  
  \noindent\textbf{Case $S = m()$.} 
  %
  We only sketch the proof here. 
  We translate the formula $\stfh{\varnothing}{m()}$ into a modal
  equation system $\mes{\stfh{\varnothing}{m()}}$.
  This results in the formula~$X_m$, defined in the context of a
  system of modal equations: for each fixed-point operator $\mu X_i$
  in $\stfh{\varnothing}{m()}$, there is an equation
  $X_i = \mathit{Id} \chop \stfsh{S_i}$, whenever $m_i$ is declared as
  $m_i\, \{S_i\}$ in~$T$.
  Next, from the standard semantics of modal equation systems,
  and by Lemma~\ref{lem:stfsh-characterisation}, it follows that the
  least solution~$\mathcal{V}_0$ of the modal equation system is equal
  (on the names of the procedures called recursively by~$m$) to the
  valuation~$\mathcal{V}_{\rho_0}$ induced by the
  interpretation~$\rho_0$ defined by the procedure table~$T$.
  Finally, by Proposition~\ref{prop:call-unfolding-2} and
  Lemma~\ref{lem:stfsh-characterisation}, we have:
  \[
    \begin{array}{cl}
      & \fdenot{\stfh{\varnothing}{m()}} \\
      = & \logdenotV{X_m}{\mathcal{V}_0} \\
      = & \logdenotV{X_m}{\mathcal{V}_{\rho_0}} \\
      = & \mathcal{V}_{\rho_0} (X_m) \\
      = & \rho_0 (m) \\
      = & \trcdenoth{m()}_{\rho_0}  \\
      = & \trcdenot{m()}
    \end{array}
  \]
\end{proof}

\subsection{Soundness of the Calculus}
\label{sec:soundness-app}

For the proof we need to relate traces restricted by a condition $b$
to trace formulas. In the following proposition, the intuition for
$\fdenot{\lnot b\lor\phi}$ is that it ignores any trace, that is, it
is trivially true for any trace, where $b$ does not hold in the
beginning.

\begin{proposition}[State formulas in judgments]
  \label{prop:state-judge}
  Let $b$ be a Boolean expression, $\phi$ a trace formula. Then
  $
    (\trcdenot{S})\!\!\mid_b\; \subseteq \fdenot{\phi}
    \quad\text{iff}\quad
    \trcdenot{S} \subseteq \fdenot{\lnot b\lor\phi}.
  $  
\end{proposition}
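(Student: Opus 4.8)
The plan is to reduce both sides of the claimed equivalence to statements about individual traces $s\cdot\sigma$ by unfolding the relevant semantic definitions. First I would rewrite the right-hand side using the clause for disjunction and the clause for state formulas in Figure~\ref{fig:logic-semantics}: $\fdenot{\lnot b\lor\phi}=\fdenot{\lnot b}\cup\fdenot{\phi}$, where $\fdenot{\lnot b}=\setdef{s\cdot\sigma}{\bexdenot{b}(s)=\fff}$ (recalling that $p$ ranges over state formulas containing $\BExp$, and that $s\models\lnot b$ iff $\bexdenot{b}(s)=\fff$). On the left, the definition of the restriction operator gives $(\trcdenot{S})\restr{b}{}=\setdef{s\cdot\sigma\in\trcdenot{S}}{\bexdenot{b}(s)=\ttt}$. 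The key observation is then that the sets $\fdenot{\lnot b}$ and $\setdef{s\cdot\sigma}{\bexdenot{b}(s)=\ttt}$ partition $\State^+$, since $\bexdenot{b}(s)\in\{\ttt,\fff\}$ for every state $s$.

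For the forward direction, I would assume $(\trcdenot{S})\restr{b}{}\subseteq\fdenot{\phi}$ and take an arbitrary $s\cdot\sigma\in\trcdenot{S}$; a case split on the value of $\bexdenot{b}(s)$ finishes it. If $\bexdenot{b}(s)=\ttt$ then $s\cdot\sigma\in(\trcdenot{S})\restr{b}{}\subseteq\fdenot{\phi}\subseteq\fdenot{\lnot b\lor\phi}$; if $\bexdenot{b}(s)=\fff$ then $s\cdot\sigma\in\fdenot{\lnot b}\subseteq\fdenot{\lnot b\lor\phi}$. For the converse I would assume $\trcdenot{S}\subseteq\fdenot{\lnot b\lor\phi}$ and take $s\cdot\sigma\in(\trcdenot{S})\restr{b}{}$, so that $s\cdot\sigma\in\trcdenot{S}$ and $\bexdenot{b}(s)=\ttt$. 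By the assumption $s\cdot\sigma\in\fdenot{\lnot b}\cup\fdenot{\phi}$, and since $\bexdenot{b}(s)=\ttt$ rules out membership in $\fdenot{\lnot b}$, we conclude $s\cdot\sigma\in\fdenot{\phi}$.

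There is no genuine obstacle here; the argument is a direct unfolding of definitions together with the law of excluded middle applied to the Boolean guard. The only point requiring a moment's care is the identification of $s\models\lnot b$ with $\bexdenot{b}(s)=\fff$, which relies on the (left implicit) convention that the satisfaction relation $\models$ on state formulas restricts to the standard Boolean evaluation $\bexdenot{\cdot}$ on $\BExp$; with that convention in place the claim is immediate. This proposition is then used in the soundness proof of the (\textsc{If}) rule, where it lets one pass between the restricted-trace formulation produced by the semantics of the conditional and the $\lnot b\lor\phi$ form appearing in the premises.
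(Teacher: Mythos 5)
Your proof is correct and follows essentially the same route as the paper's: unfold $\fdenot{\lnot b\lor\phi}$ as $\fdenot{\lnot b}\cup\fdenot{\phi}$, identify non-membership in $\fdenot{\lnot b}$ with $\bexdenot{b}(s)=\ttt$, and case-split on the guard. The only difference is presentational --- the paper argues both directions by contradiction while you argue them directly --- which changes nothing of substance.
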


\begin{proof}
  ``Only If'' direction:
  Assume $(\trcdenot{S})\!\!\mid_b\;\subseteq \fdenot{\phi}$ and there
  is a trace $s\cdot\sigma\in\trcdenot{S}$ that is not in
  $\fdenot{\lnot b\lor\phi}=\fdenot{\lnot b}\cup\fdenot{\phi}$, hence,
  $s\cdot\sigma\not\in\fdenot{\lnot b}$ and
  $s\cdot\sigma\not\in\fdenot{\phi}$. But
  $s\cdot\sigma\not\in\fdenot{\lnot b}$ implies
  $s\cdot\sigma\in(\trcdenot{S})\!\!\mid_b\;\subseteq\fdenot{\phi}$:
  contradiction.

  ``If'' direction:
  Assume $\trcdenot{S} \subseteq \fdenot{\lnot b\lor\phi}$ and there
  is a trace $s\cdot\sigma\in(\trcdenot{S})\!\!\mid_b$ that is not in
  $\fdenot{\phi}$. From the assumption and
  $(\trcdenot{S})\!\!\mid_b\subseteq\trcdenot{S}$ we obtain
  $s\cdot\sigma\in\fdenot{\lnot b\lor\phi}$.  However, since
  $\bexdenot{b} (s) = \ttt$ we must have in fact
  $s\cdot\sigma\in\fdenot{\phi}$: contradiction.
\end{proof}

\begin{proof}[Proof of Theorem~\ref{thm:calculus-soundness}]
  The proof system is sound, since every rule of the system is
  \emph{locally sound}, in the sense that its conclusion is valid
  whenever all its premises are valid.
  We shall prove local soundness of each rule.
  Without loss of generality, we ignore~$\Gamma$ in most cases.

  \noindent\textbf{Rule} \textsc{If}.

  Let~$\mathcal{I}$ be an arbitrary interpretation.
  Using Proposition~\ref{prop:state-judge}, we have: 
  \[
    \begin{array}{ll}
      & 
        \semsequenti{}{\judge{\Skip; S_1}{\lnot b \lor \phi}} \>\>\text{and}\>
        \semsequenti{}{\judge{\Skip; S_2}{b \lor \phi}} \\
      \Leftrightarrow\> & 
                          \trcdenoti{\Skip; S_1} \subseteq \fdenot{\lnot b \lor \phi}
                          \>\text{and}\>\\
      &
        \trcdenoti{\Skip; S_2} \subseteq \fdenot{b \lor \phi}  \\
      \Leftrightarrow & 
                        \sharp \trcdenoti{S_1} \subseteq \fdenot{\lnot b \lor \phi} 
                        \>\text{and}\>\>
                        \sharp \trcdenoti{S_2} \subseteq \fdenot{b \lor \phi}  \\
      \Leftrightarrow & 
                        (\sharp \trcdenoti{S_1}) \restr{b}{} \,\subseteq \fdenot{\phi}
                        \>\text{and}\>
                        (\sharp \trcdenoti{S_2}) \restr{\lnot b}{} \,\subseteq \fdenot{\phi} \\
      \Leftrightarrow & 
                        \left((\sharp \trcdenoti{S_1}) \restr{b}{} \cup\
                        (\sharp \trcdenoti{S_2}) \restr{\lnot b}{}\right)
                        \,\subseteq\,
                        \fdenot{\phi}  \\
      \Leftrightarrow\> & 
                          \trcdenoti{\If\ b\ \Then\ S_1\ \Else\ S_2} \subseteq \fdenot{\phi} \\
      \Leftrightarrow & 
                        \semsequenti{}{\judge{\If\ b\ \Then\ S_1\ \Else\ S_2}{\phi}} 
    \end{array}
  \]

  \noindent
  where we use that
  $\trcdenoti{\Skip; S} = \sharp \trcdenoti{S}$, and therefore:

  \[
    \begin{array}{ll}
     & \semsequenti{}{\judge{\Skip; S_1}{\lnot b \lor \phi}} \>\>\text{and}\>
      \semsequenti{}{\judge{\Skip; S_2}{b \lor \phi}}\\
  \Leftrightarrow\> & 
      \semsequenti{}{\judge{\If\ b\ \Then\ S_1\ \Else\ S_2}{\phi}}
    \end{array}
  \]

\noindent\textbf{Rule} \textsc{Call}.
For the proof of the rule we employ the principle of Fixed-Point
Induction. 
To simplify the presentation, we assume there is only one procedure
$m()$, declared as $m\, \{S_m\}$ in~$T$.
The general case follows from Beki\v{c}'s Principle. The notation
$\rho [m\mapsto\gamma]$ specifies the interpretation that is identical
to $\rho$, except for $\rho(m)=\gamma$.
%
%
\[
  \begin{array}{ll}
   & 
   \semsequent{Y_m : \phi_m}{\judge{S_m [\Skip; Y_m / m()]}{\phi_m}} \\
   \Leftrightarrow\> & 
   \forall \mathcal{I}.\
                       (\trcdenoti{Y_m} \subseteq \fdenot{\phi_m}\,\\
    & \qquad\Rightarrow 
       \trcdenoti{S_m [\Skip; Y_m / m()]} \subseteq \fdenot{\phi_m}) \\
   \Leftrightarrow\> & 
   \forall \gamma.\
      (\gamma \subseteq \fdenot{\phi_m}\, \Rightarrow 
       \trcdenoth{S_m}_{\rho [m \mapsto \sharp \gamma]} 
       \subseteq \fdenot{\phi_m}) \\
   \Leftrightarrow\> & 
   \forall \gamma.\
      (\sharp \gamma \subseteq \fdenot{\mathit{Id} \chop \phi_m}\, \Rightarrow 
       \sharp \trcdenoth{S_m}_{\rho [m \mapsto \sharp \gamma]} 
       \subseteq \fdenot{\mathit{Id} \chop \phi_m}) \\
   \Leftrightarrow\> & 
   \forall \gamma.\
      (\gamma \subseteq \fdenot{\mathit{Id} \chop \phi_m}\, \Rightarrow 
       \sharp \trcdenoth{S_m}_{\rho [m \mapsto \gamma]}\, 
       \subseteq \fdenot{\mathit{Id} \chop \phi_m}) \\
   \Rightarrow\> & 
   \rho_0 (m) \subseteq \fdenot{\mathit{Id} \chop \phi_m} \\
   %
   %
   \Leftrightarrow\> & 
   \trcdenoth{m()}_{\rho_0} \subseteq \fdenot{\mathit{Id}\chop\phi_m} \\
   \Leftrightarrow\> & 
   \trcdenot{m()} \subseteq \fdenot{\mathit{Id}\chop\phi_m} \\
   \Leftrightarrow\> & 
   \semsequent{}{\judge{m()}{\mathit{Id}\chop\phi_m}} 
  \end{array}
\]
\end{proof}

\subsection{Completeness of the Calculus}
\label{sec:completeness-app}

\begin{proof}[Proof of Theorem~\ref{thm:existence-rec}]
  The proof proceeds by induction on the structure of~$S$. However, In
  the case of a call the statement does not necessarily get smaller,
  because the body of~$m$ is expanded. So we need to argue that the
  induction is well-founded. Indeed, the lexicographic order on
  $\langle N-\left|\Gamma\right|,\left|S\right|\rangle$ (obviously,
  the first component is never negative) always decreases. Since
  $\Gamma$ is irrelevant for all cases except $S=m()$, we simplify the
  claim accordingly for these.

  \noindent\textbf{Case $S = \If\ b\ \Then\ S_1\ \Else\ S_2$.}
  Assume by the induction hypothesis that
  $\sequent{}{\judge{S_1}{\stfs{S_1}}}$ and
  $\sequent{}{\judge{S_2}{\stfs{S_2}}}$ can be proven.
  We also use that $\phi\models(p\lor(\lnot p\land\phi))$ is a valid
  consequence.
  We have: 
  \[
    \begin{array}{rl}
      & 
        \sequent{}{\judge{\If\ b\ \Then\ S_1\ \Else\ S_2}
        {\stfs{\If\ b\ \Then\ S_1\ \Else\ S_2}}} \\
      \Leftrightarrow & 
                        \sequent{}{\judge{\If\ b\ \Then\ S_1\ \Else\ S_2}
                        {(b \land \mathit{Id} \chop \stfs{S_1})
                        \lor
                        }} \\
      & \qquad\qquad\qquad\qquad\qquad(\lnot b \land \mathit{Id} \chop \stfs{S_2})\\
      \Leftarrow &
                   \mbox{\{By rule (\textsc{If}) combined with rule (\textsc{Or})\}} \\
      &
        \sequent{}{\judge{\Skip; S_1}
        {\lnot b \lor (b \land \mathit{Id} \chop \stfs{S_1})}} 
        ~\mbox{and}\\
      &
        \sequent{}{\judge{\Skip; S_2}
        {b \lor (\lnot b \land \mathit{Id} \chop \stfs{S_2})}} \\  
      \Leftarrow &
                   \mbox{\{By rule (\textsc{Cons})\}} \\
      &
        \sequent{}{\judge{\Skip; S_1}
        {\mathit{Id} \chop \stfs{S_1}}} 
        ~\mbox{and}
        \sequent{}{\judge{\Skip; S_2}
        {\mathit{Id} \chop \stfs{S_2}}} \\  
      \Leftrightarrow &
                        \mbox{\{By rule (\textsc{Seq}) combined with rule (\textsc{Skip})
                        and}\\
      & \mbox{the induction hypothesis\}} \\
      &
        \mathsf{true} 
    \end{array}
  \]
  
  \noindent\textbf{Case $S = m()$.}
  There are two subcases: either $m\in\{m_1,\ldots,m_n\}$ or not.
  In the first case,
  $S[\Skip;Y_{m_1}/m_1(),\ldots,\Skip;Y_{m_n}/m_n()]=\Skip;Y_m$

  In the second case,
  $S[\Skip;Y_{m_1}/m_1(),\ldots,\Skip;Y_{m_n}/m_n()]=m()$.
  In both cases, we have 
  $\stfs{m()} = \mathit{Id} \chop \phi_m$.
  
  \noindent\textbf{Subcase $\Skip;Y_m$} 

  We have to prove the judgment
  $\sequent{\Gamma}{\judge{\Skip;Y_m}{\mathit{Id} \chop \phi_m}}$.
  Using rules (\textsc{Seq}) and (\textsc{Skip}), this reduces to
  $\sequent{\Gamma}{\judge{Y_m}{\phi_m}}$.  Because of the assumption
  $m\in\{m_1,\ldots,m_n\}$ we have $\judge{Y_m}{\phi_m}\in\Gamma$, so
  the proof is finished.
  
  \noindent\textbf{Subcase $m()$}

  We have to prove the judgment
  $\sequent{\Gamma}{\judge{m()}{\mathit{Id}\chop\phi_m}}$.  Rule
  (\textsc{Call}) is applicable due to assumption
  $m\not\in\{m_1,\ldots,m_n\}$.
  %
  %
  Let
  \[
    S_m'\defeq S_m[\Skip;Y_m/m(),\Skip;Y_{m_1}/m_1(),\ldots,\Skip;Y_{m_n}/m_n()]
  \]
  The obtained premise yields the new claim to prove:
  \[
    \sequent{\Gamma\cup\{\judge{Y_m}{\phi_m}\}}{\judge{S_m'}{\mu
        X_m.\,\stfh{\{m\}}{S_m}}}
  \]

  We apply rule (\textsc{Unfold}) on the right and obtain:
  \[
    \sequent{\Gamma\cup\{\judge{Y_m}{\phi_m}\}}{\judge{S_m'}{\stfh{\{m\}}{S_m}[\phi_m/X_m]}}
  \]

  By induction and by the definition of $\stfs{S}$ we finish the proof
  up to subgoals of the form (for some $m'\neq m$):
  \[
    \sequent{\Gamma\cup\{\judge{Y_m}{\phi_m}\}}{\judge{m'()}{\stfh{\{m\}}{m'}[\phi_m/X_m]}}
  \]
  
  Unfortunately, the structure of the formula on the right does not
  conform to $\stfs{m'}$ as required.
  But, observing that
  $\fdenot{\stfs{S}} = \fdenot{\stfh{\overline{X}}{S}}$,
  as well as soundness of unfolding, we can use \textsc{Cons} to
  obtain:
  \[
    \sequent{\Gamma\cup\{\judge{Y_m}{\phi_m}\}}{\judge{m'()}{\stfs{m'}}}
  \]  
  This follows from the induction hypothesis, because of
  $N-\left|\Gamma\right|>N-\left|\Gamma\cup\{\judge{Y_m}{\phi_m}\}\right|$.

\end{proof}

\subsection{Canonical Programs}
\label{sec:canonical-programs-app}

\begin{proof}[Proof of Proposition~\ref{prop:well}]
  We need to show that for any call $m_X()$ in $\can{\phi}$ there is
  exactly one declaration of $m_X$ in $T_\phi$. This is a
  straightforward structural induction. 
\end{proof}

\begin{lemma}
\label{lem:can-characterisation-rec}
  Let $\phi$ be an open trace formula not containing fixed-point 
  binders~$(\mu X.)$, and let
  $\can{\phi} = \langle S_\phi, T_\phi \rangle$.
  Then, we have:
  $ \trcdenoth{S_\phi}_\rho \,\widetilde{=}\ 
     \logdenotV{\phi}{\mathcal{V}_\rho} $
  for all interpretations $\rho : M \rightarrow 2^{\State^+}$
  of the procedures~$M$ declared in~$T_\phi$, and (induced) valuations 
  $\mathcal{V}_\rho : \mathsf{RVar} \rightarrow 2^{\State^+}$
  defined by
  $\mathcal{V}_\rho (X_m) \defeq \rho (m)$.
\end{lemma}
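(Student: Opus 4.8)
I would prove Lemma~\ref{lem:can-characterisation-rec} by structural induction on the fixed-point-free formula~$\phi$, matching each syntactic case of the grammar against the corresponding clause in Definition~\ref{def:can} and using the semantic equations of Figure~\ref{fig:logic-semantics} on the logic side and Figure~\ref{fig:while-tr} (together with the $\Rec^\ast$-clause for non-deterministic choice) on the program side. The base cases ($\mathit{Id}$, $\mathit{Sb}^a_x$, $X$) are essentially the computations already carried out in Lemma~\ref{lem:stfsh-characterisation}: e.g.\ $\trcdenoth{\Skip}_\rho = \{s\cdot s\mid s\in\State\} = \fdenot{\mathit{Id}}$ and $\trcdenoth{m_X()}_\rho = \rho(m_X) = \mathcal{V}_\rho(X) = \logdenotV{X}{\mathcal{V}_\rho}$, and since equality implies stutter-equivalence ($A = B \Rightarrow A\,\widetilde{=}\,B$, noted in the excerpt) these cases close immediately.

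For the inductive cases I would proceed as follows. For $\phi\chop\psi$: $\can{\phi\chop\psi} = \langle S_\phi;S_\psi,\,T_\phi\,T_\psi\rangle$, so $\trcdenoth{S_\phi;S_\psi}_\rho = \trcdenoth{S_\phi}_\rho\chop\trcdenoth{S_\psi}_\rho \,\widetilde{=}\, \logdenotV{\phi}{\mathcal{V}_\rho}\chop\logdenotV{\psi}{\mathcal{V}_\rho} = \logdenotV{\phi\chop\psi}{\mathcal{V}_\rho}$; here I need the (easy) fact that $\chop$ is a congruence for $\widetilde{=}$, which follows because stuttering inside the constituent traces does not affect how they glue at the shared state. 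For $\phi\vee\psi$: $\can{\phi\vee\psi} = \langle\If\ \ast\ \Then\ S_\phi\ \Else\ S_\psi,\,T_\phi\,T_\psi\rangle$, and the trace semantics of non-deterministic choice is $\sharp\trcdenoth{S_\phi}_\rho \cup \sharp\trcdenoth{S_\psi}_\rho$; the two leading $\sharp$'s are exactly the stuttering steps introduced by evaluating the guard, which vanish under $\widetilde{=}$, so this reduces to $\logdenotV{\phi}{\mathcal{V}_\rho}\cup\logdenotV{\psi}{\mathcal{V}_\rho} = \logdenotV{\phi\vee\psi}{\mathcal{V}_\rho}$. For $p\wedge\phi$: $\can{p\wedge\phi} = \langle\If\ p\ \Then\ S_\phi\ \Else\ \Abort,\,T_\phi\rangle$, whose trace semantics is $(\sharp\trcdenoth{S_\phi}_\rho)\restr{p}{}\,\cup\,(\sharp\trcdenoth{\Abort}_\rho)\restr{\lnot p}{}$; by Proposition~\ref{prop:any} the second disjunct is empty, the $\sharp$ in the first disappears modulo stuttering, and using Proposition~\ref{prop:state-restr} (in the form $(\sharp A)\restr{p}{} = \fdenot{p}\cap A$ up to stuttering, or more directly filtering by $p$ commutes with the induction hypothesis) we get $\fdenot{p}\cap\logdenotV{\phi}{\mathcal{V}_\rho} = \logdenotV{p\wedge\phi}{\mathcal{V}_\rho}$.

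**Main obstacle.** The delicate point is handling the auxiliary procedures: $\can{\phi\vee\psi}$ and $\can{\phi\chop\psi}$ merge the tables $T_\phi\,T_\psi$, and $\can{p\wedge\phi}$ reuses $T_\phi$ plus possibly a fresh $\mathit{abort}$ declaration, so I must ensure that the interpretation $\rho$ on the combined procedure set restricts correctly to the subprograms and that the assumption of unique recursion-variable names (made ``without loss of generality'' before Section~\ref{sec:canonical-programs-1}) guarantees $T_\phi$ and $T_\psi$ declare disjoint procedures, so $\trcdenoth{S_\phi}_\rho$ depends only on $\rho$ restricted to $M_\phi$ and similarly for $\psi$. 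This bookkeeping, together with making precise that $\chop$, $\cup$, and $\cdot\restr{p}{}$ are all congruences for $\widetilde{=}$ (which I would isolate as a small preliminary observation), is the only real work; everything else is a direct unfolding of definitions and an application of the induction hypothesis. I would state the congruence facts once at the start of the proof and then let each case go through in two or three lines of display-math as sketched above.
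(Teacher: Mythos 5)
Your proposal is correct and follows essentially the same route as the paper's proof: structural induction on $\phi$, with the $\Abort$ branch eliminated via Proposition~\ref{prop:any}, the guard-evaluation stutters absorbed by $\widetilde{=}$, and the recursion-variable case closed by $\trcdenoth{m_X()}_\rho = \rho(m_X) = \mathcal{V}_\rho(X)$. Your explicit isolation of the congruence properties of $\chop$, $\cup$, and $\restr{p}{}$ with respect to stutter-equivalence, and of the table-disjointness bookkeeping (which the paper delegates to Proposition~\ref{prop:well} and the unique-names convention), only makes the argument more careful than the published version.
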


\begin{proof}
  We proceed by structural induction on $\phi$.

  \noindent\textbf{Case $\phi = p \land \psi$}
  
  By the induction hypothesis,
  $\trcdenoth{S_\psi}_\rho \widetilde{=}\,
  \logdenotV{\psi}{\mathcal{V}_\rho}$.
  \begin{align*}
    & \trcdenoth{\If\ p\ \Then\ S_\psi\ \Else\ \Abort}_\rho
      \\
    & \qquad = (\sharp\trcdenoth{S_\psi}_\rho)\restr{p}{}\cup\;(\sharp\trcdenoth{\Abort}_\rho)\restr{\neg p}{}\\
    & \qquad = (\sharp\trcdenoth{S_\psi}_\rho)\restr{p}{}\quad\quad
         \text{(Proposition~\ref{prop:any})}\\
    & \qquad = \State^+\restr{p}{}\cap\;\sharp\trcdenoth{S_\psi}_\rho \widetilde{=}\ \ \State^+\restr{p}{}\cap\;\trcdenoth{S_\psi}_\rho\\
    & \qquad \;\widetilde{=}\ \State^+\restr{p}{}\cap\,\logdenotV{\psi}{\mathcal{V}_\rho}\quad\text{(Induction hypothesis)}\\
    & \qquad = \logdenotV{p\land\psi}{\mathcal{V}_\rho}
  \end{align*}

  \noindent\textbf{Case $\phi = \phi_1 \lor \phi_2$} 

  By the induction hypothesis, we have that
  $\trcdenoth{S_{\phi_1}}_\rho \,\widetilde{=}\, 
   \logdenotV{\phi_1}{\mathcal{V}_\rho}$
  and
  $\trcdenoth{S_{\phi_2}}_\rho \,\widetilde{=}\, 
   \logdenotV{\phi_2}{\mathcal{V}_\rho}$.
  Therefore,
  \begin{align*}
    \trcdenoth{\If\ \ast\ \Then\ S_{\phi_1}\ \Else\ S_{\phi_2}}_\rho
    & =\, \sharp\trcdenoth{S_{\phi_1}}_\rho\cup\;\sharp\trcdenoth{S_{\phi_2}}_\rho\\
     \;\widetilde{=}\ \ \trcdenoth{S_{\phi_1}}_\rho\cup\;\trcdenoth{S_{\phi_2}}_\rho 
    & \;\widetilde{=}\ \logdenotV{\phi_1}{\mathcal{V}_\rho} \cup\ 
            \logdenotV{\phi_2}{\mathcal{V}_\rho}\quad\text{(Ind.\ hyp.)}\\
    & =\, \logdenotV{\phi_1\lor\phi_2}{\mathcal{V}_\rho}
  \end{align*}

  \noindent\textbf{Case $\phi = X$}

  We have:
  \begin{align*}
    \trcdenoth{m_X ()}_\rho
     = \rho (m_X) 
     = \mathcal{V}_\rho (X) 
     = \logdenotV{X}{\mathcal{V}_\rho}
  \end{align*}
\end{proof}

\begin{proof}[Proof of Theorem~\ref{thm:can-characterisation-rec}]
  By structural induction on~$\phi$.  We only sketch the proof idea
  here.  An argument can be made similar to the one in the last case
  of the proof of Theorem~\ref{thm:stf-characterisation-rec}, by
  referring to the modal equation system corresponding to~$\phi$, and
  using Lemma~\ref{lem:can-characterisation-rec} to generalise the
  treatment to open formulas~$\phi$ and statements that contain calls
  to procedures not declared in~$T_\phi$.  Proposition~\ref{prop:well}
  ensures that the program on the left is well-defined.
  Compositionality of the proof is justified by Beki\v{c}'s
  Principle. 
\end{proof}

\end{document}